\definecolor{DarkGreen}{rgb}{0.1,0.5,0.1}
\renewcommand*{\backref}[1]{}
\renewcommand*{\backrefalt}[4]{%
    \ifcase #1 (Not cited.)%
    \or        (Cited on page~#2)%
    \else      (Cited on pages~#2)%
    \fi}
\Crefname{property}{Property}{Properties}
\Crefname{example}{Example}{Examples}
\Crefname{table}{Table}{Tables}
\tikzset{snake it/.style={decorate, decoration=snake}}
\newcommand*{\tikzmk}[1]{\tikz[remember picture,overlay,] \node (#1) {};\ignorespaces}
\newcommand{\boxit}[1]{\tikz[remember picture,overlay]{\node[yshift=3pt,xshift=4pt,fill=#1,opacity=.25,fit={(A)($(B)+(1.0\linewidth,.8\baselineskip)$)}] {};}\ignorespaces}
\colorlet{mygray}{gray!40}
\let\oldnl\nl
\newcommand{\nonl}{\renewcommand{\nl}{\let\nl\oldnl}}
\newtheorem{lemma}{Lemma}
\newtheorem{corollary}{Corollary}
\theoremstyle{definition}
\theoremstyle{remark}
\newtheorem{remark}{Remark}
\Crefname{claim}{Claim}{Claims}
\renewcommand{\O}{\mathcal{O}}
\let\displaystyle\textstyle
\newcommand\floor[1]{\lfloor#1\rfloor}
\newcommand\ceil[1]{\lceil#1\rceil}
\newcommand{\C}{{\mathcal{C}}}
\newcommand{\CC}{\ensuremath{\mathcal C}\xspace}
\newcommand{\EF}[1]{\ifstrempty{#1}{\textup{EF}}{\textup{EF{#1}}}}
\newcommand{\EFX}{\textnormal{\textup{EFX}}}
\newcommand{\EQ}[1]{\ifstrempty{#1}{\textup{EQ}}{\textup{EQ{#1}}}}
\newcommand{\I}{\mathcal{I}}
\newcommand{\N}{{\mathcal{N}}}
\newcommand{\NPC}{\textnormal{\textup{NP-complete}}}
\newcommand{\NPH}{\textnormal{\textup{NP-hard}}}
\newcommand{\NPHness}{\textnormal{\textup{NP-hardness}}}
\newcommand{\NW}{\textup{NW}}
\renewcommand{\O}{{\mathcal{O}}}
\newcommand{\OO}{\ensuremath{\mathcal O}\xspace}
\newcommand{\PO}{\textup{PO}}
\newcommand{\U}{\mathcal{U}}
\let\displaystyle\textstyle
\newcommand{\LES}{\textup{\textsc{LNES}}}
\newcommand{\LESfull}{\textup{\textsc{Linear Near-Exact Satisfiability}}}
\newcommand{\true}{\textsc{true}\xspace}
\newcommand{\pr}{\ensuremath{\prime}\xspace}
\newcounter{nmcomment}
\newcounter{rvcomment}
\newcounter{cscomment}
\title{Equitable Division of a Path}
\author[1]{Neeldhara Misra}
\author[2]{Chinmay Sonar}
\author[3]{P. R. Vaidyanathan}
\author[4]{Rohit Vaish}
\affil[1]{Indian Institute of Technology Gandhinagar, India\\
	{\small\texttt{neeldhara.m@iitgn.ac.in}}}
\affil[2]{University of California, Santa Barbara, USA\\
	{\small\texttt{chinmaysonar96@gmail.com}}}
\affil[3]{Technische Universit\"{a}t Wien, Austria\\
	{\small\texttt{vaidyanathan@ac.tuwien.ac.at}}}
\affil[4]{Tata Institute of Fundamental Research\\
	{\small\texttt{rohit.vaish@tifr.res.in}}}
\date{}
\begin{document}

\maketitle

\begin{abstract}
We study fair resource allocation under a connectedness constraint wherein a set of indivisible items are arranged on a \emph{path} and only connected subsets of items may be allocated to the agents. An allocation is deemed fair if it satisfies \emph{equitability up to one good} (\EQ{1}), which requires that agents' utilities are approximately equal. We show that achieving \EQ{1} in conjunction with well-studied measures of \emph{economic efficiency} (such as Pareto optimality, non-wastefulness, maximum egalitarian or utilitarian welfare) is computationally hard even for \emph{binary} additive valuations. On the algorithmic side, we show that by relaxing the efficiency requirement, a connected \EQ{1} allocation can be computed in polynomial time for \emph{any} given ordering of agents, even for general monotone valuations. Interestingly, the allocation computed by our algorithm has the highest egalitarian welfare among all allocations consistent with the given ordering. On the other hand, if efficiency is required, then tractability can still be achieved for binary additive valuations with \emph{interval structure}. On our way, we strengthen some of the existing results in the literature for other fairness notions such as envy-freeness up to one good (\EF{1}), and also provide novel results for negatively-valued items or \emph{chores}.
\end{abstract}

\section{Introduction}
\label{sec:Introduction}

The question of how to \emph{fairly} divide a set of resources among agents has been extensively studied in economics, mathematics, and computer science. The formal treatment of such resource allocation problems---commonly referred to as \emph{fair division}---dates back several decades~\citep{S48problem}. There is now a rich literature on fair division problems~\citep{BT96fair,M04fair,BCE+16handbook}, comprising of a variety of solution concepts and associated existential and computational results. Many of these insights have found impressive practical applications such as in rent division~\citep{S99rental}, credit assignment~\citep{dCMT08impartial}, and cluster computing~\citep{GZH+11dominant}.

Many real-world resource allocation problems exhibit a natural spatial or temporal structure, and in such scenarios, it is desirable to have \emph{contiguous} allocations. For example, when allocating supercomputing time, a contiguous processing window is preferable over one that involves multiple restarts. Similarly, when assigning office space in a department building, each research group might prefer a contiguous segment of rooms for ease of communication.

\begin{sloppypar}
In this work, we study the seemingly conflicting goals of fairness and contiguity in the context of allocating \emph{indivisible} resources (or goods). Specifically, we consider a set of indivisible goods that are represented by the vertices of a \emph{path} graph, and require that each agent is allocated a connected subgraph. Fair allocation of indivisible goods has received growing interest within both artificial intelligence as well as theoretical computer science literature~\citep{AGM+15fair,BL16characterizing,CG18approximating,KPW18fair,CKM+19unreasonable},
 motivated, in part, by notable real-world applications such as course allocation~\citep{B11combinatorial} and property division ~\citep{PW12divorcing}. The research area has been further popularized by the website \emph{Spliddit} ({\small{\url{http://www.spliddit.org/}}}) that provides implementations of provably fair algorithms for a wide array of resource allocation problems~\citep{GP15spliddit}.
\end{sloppypar}

While there are countless formulations of what it means to be fair, each with its own merit, in this work we focus on one well-established notion of fairness called \emph{equitability}~\citep{DS61cut}. An equitable allocation is one in which agents derive equal utilities from their assigned shares. 
Equitability is a particularly compelling fairness criterion in settings such as dividing climate change responsibilities among countries~\citep{T02fair} and in designing taxation policies. It also enjoys empirical support, as lab experiments and an online user study have found that equitability---or ``aversion of interpersonal inequity'''---can be an important predictor of the perceived fairness of an allocation, possibly more so than the classic ``intrapersonal'' criterion of envy-freeness~\citep{HP09envy,GMP+17fairest}.
Equitability is also a key property in the well-known \emph{adjusted winner} algorithm~\citep{BT96fair} which has been applied to divorce settlements.

For indivisible items, perfect equitability may not be possible, which motivates the need for a natural relaxation called \emph{equitability up to one good} (\EQ{1})~\citep{FSV+19equitable}. This notion requires that the inequity between any pair of agents can be eliminated by removing some item from the happier agent's bundle. Since an empty allocation is vacuously fair, the study of fairness notions is often coupled with \emph{economic efficiency}. To this end, we study \EQ{1} alongside various efficiency measures such as Pareto optimality, non-wastefulness, and maximum egalitarian (max-min) or utilitarian (sum) welfare (see Preliminaries for the relevant definitions).

The study of connected fair allocations of general graphs was initiated by \citet{BCE+17fair} with a focus on other fairness notions such as envy-freeness, proportionality, and maximin share. Concurrently, \citet{S19fairly} showed that for a path graph, a connected and approximately equitable allocation always exists and can be efficiently computed. This work also provided a non-constructive proof of existence of egalitarian-optimal and approximately equitable allocations, but did not consider other efficiency notions. Importantly, the notion of approximate equitability in Suksompong's work is strictly weaker than \EQ{1}, and as we observe later, 
his algorithm could fail to find \EQ{1} allocations even when such allocations are known to exist. Thus, the existential and computational questions pertaining to \EQ{1} allocations remain unanswered by prior work.

\textbf{Our Contributions:}
We initiate the study of \EQ{1} allocations under connectedness constraints 
and make the following contributions:
\begin{enumerate}
    \item \textbf{Hardness results for \EQ{1} and efficient allocations}: We show that checking the existence of a connected \EQ{1} allocation satisfying any of the aforementioned efficiency measures is \NPH{} even under \emph{binary} additive valuations (\Cref{thm:NW_NPcomplete,cor:EQ1_NW_Egal_Util,thm:EF1_EQ1_PO}). 
    All of our results follow from a \emph{single} construction that also has implications for other fairness notions such as \emph{envy-freeness up to one good} (\EF{1}) as well as negatively-valued items (or \emph{chores}).
    
    \item \textbf{Algorithmic result for complete \EQ{1} allocations}: By relaxing the efficiency condition and only requiring \emph{completeness} (i.e., not leaving any good unassigned), we obtain a polynomial-time algorithm for computing a connected \EQ{1} allocation whose egalitarian welfare is the highest among all allocations that are consistent with a given ordering of agents (\Cref{thm:EQ1+complete-polytime}). This resolves an open problem of \citet{S19fairly}. Notably, our algorithm applies to any instance with monotone (possibly non-additive) valuations.
    
    \item \textbf{Structured preferences}: We provide an efficient algorithm for checking the existence of a connected, non-wasteful, and \EQ{1} allocation when agents have binary additive valuations with \emph{extremal} interval structure~(\Cref{thm:EQoneNW_Extremal}).
\end{enumerate}
\section{Related Work}
\label{sec:Related_Work}

Fair division problems have been classically studied in the context of \emph{divisible} resources, most prominently in the \emph{cake-cutting} literature; see \citep[Chapter 13]{BCE+16handbook} for an excellent survey. There is also a vast literature on connected (or contiguous) cake-cutting, spanning various notions of fairness and economic efficiency~\citep{S80cut,S99rental,DQS12algorithmic,BCH+12optimal,ADH13computing,AD15efficiency,SS18resource,BN19communication,GHS20contiguous}. In particular, for equitability, it is known that for \emph{any} given ordering of the agents, there exists a connected equitable division of a cake consistent with the ordering~\citep{CDP13existence}. Although no finite procedure can compute an exactly equitable division even without the connectedness constraint~\citep{PW17lower}, 
it is known that an $\varepsilon$-equitable connected division can be computed using finite protocols~\citep{CP12computability}. Equitability has also been studied in combination with other fairness notions. For example, while there always exists a connected equitable division that is also \emph{proportional}~\citep{CDP13existence}, there might not exist a connected division that is simultaneously equitable and \emph{envy-free}~\citep{BJK06better}.

\begin{sloppypar}
For \emph{indivisible} resources, the study of connected fair division has more recent origins~\citep{MT14envy,BCE+17fair,S19fairly}. A number of fairness notions such as proportionality, envy-freeness, and maximin share have been examined in this model when the resources are \emph{goods}~\citep{BCE+17fair,LT18maximin,IP19Pareto,BCF+19almost,BIL+21connected,S19fairly,OPS19fairly}, \emph{chores}~\citep{BCL19chore}, and \emph{mixed} items involving both goods and chores~\citep{ACI+19fair}. A noteworthy result in this context concerns the existence of allocations satisfying \emph{envy-freeness up to one good} (\EF{1}) when the number of agents is at most four~\citep{BCF+19almost}, or when agents have identical valuations~\citep{BCF+19almost,OPS19fairly}. As we observe in \Cref{rem:EF1_identical}, the latter result follows as a corollary of our main algorithmic result.
\end{sloppypar}

For indivisible goods without the connectedness requirement, \citet{GMT14near} provided an efficient algorithm for achieving \emph{equitability up to any good}. Subsequently, \citet{FSV+19equitable} studied (approximate) equitability along with Pareto optimality. Among other results, they showed that an \EQ{1} and Pareto optimal allocation might fail to exist even with binary valuations, and provided efficient algorithms for checking the existence of such allocations. By contrast, as we show in \Cref{thm:EF1_EQ1_PO}, the problem becomes \NPC{} when connectedness is also required.

\section{Preliminaries}
\label{sec:Preliminaries}
Let $\N = \{a_1,a_2,\dots,a_n\}$ be a set of $n \in \mathbb{N}$ \emph{agents}, and $G = (V,E)$ be an undirected graph. Each vertex $v \in V$ of the graph $G$ corresponds to an \emph{indivisible good} (or \emph{item}) with $m \coloneqq |V|$ goods overall. A \emph{(connected) bundle} is a set of goods $S \subseteq V$ whose corresponding vertices induce a connected subgraph of $G$. We let $\C(V) \subseteq 2^V$ denote the set of all connected subsets of $V$. 
Unless stated otherwise, we will assume that $G$ is a \emph{path} given by $\{v_1,v_2,\dots,v_m\}$ where $\{v_i,v_{i+1}\} \in E$ for $i \in [m-1]$.

A (connected) \emph{allocation} $A : \N \rightarrow \C(V)$ assigns to each agent $a_i$ a connected bundle $A(a_i) \in \C(V)$ such that no good is assigned to more than one agent. We will denote an allocation as an ordered tuple $A = (A_1,A_2,\dots,A_n)$, where $A_i \coloneqq A(a_i)$. An allocation is said to be \emph{complete} if it does not leave any good unassigned; that is, for any good $v$, there exists some agent $a_i$ such that $v \in A_i$. A \emph{partial} allocation is one that is not complete. Unless stated explicitly otherwise, the term `allocation' will refer to a complete allocation.

The preferences of agent $a_i$ are specified by a \emph{valuation function} $u_i : \C(V) \rightarrow \mathbb{N} \cup \{0\}$.
We say that the valuation functions are \emph{monotone} if for any pair of connected bundles $S,S' \in \C(V)$ such that $S \subseteq S'$, we have $u_i(S) \leq u_i(S')$. The valuation functions are said to be \emph{additive} if for each agent $a_i$ and each bundle $S \in \C(V)$, $u_i(S) \coloneqq \sum_{v \in S} u_i(\{v\})$, where $u_i(\emptyset) \coloneqq 0$. Note that since all valuations are non-negative, any additive valuation function is also monotone. We will assume throughout that the valuations are additive (however, note that our algorithmic results apply to monotone, possibly non-additive valuations). For simplicity, we will write $u_{i,j} \coloneqq u_i(\{v_j\})$. 
An $n$-tuple of valuation functions $\U = \{u_1,\dots,u_n\}$ is called a \emph{valuation profile}. We say that agents have \emph{binary} (additive) valuations if $u_{i,j} \in \{0,1\}$ for all $a_i \in \N$ and $v_j \in V$.

\textbf{Fairness notions:} An allocation $A$ is said to be
\begin{itemize}
    \item \emph{equitable} $(\EQ{})$ if for every pair of agents $a_i, a_k \in \N$, the utilities of $a_i$ and $a_k$ for their respective bundles are equal, that is, $u_i(A_i) = u_k(A_k)$,
    \item \emph{equitable up to one good} $(\EQ{1})$ if for every pair of agents $a_i, a_k \in \N$ such that $A_k \neq \emptyset$, there exists some good $v \in A_k$ such that $u_i(A_i) \geq u_k(A_k \setminus \{v\})$,
    \item \emph{envy-free} $(\EF)$ if for every pair of agents $a_i, a_k \in \N$, $u_i(A_i) \geq u_i(A_k)$, and
    \item \emph{envy-free up to one good} $(\EF{1})$ if for every pair of agents $a_i, a_k \in \N$, $u_i(A_i) \geq u_i(A_k \setminus \{v\})$ for some $v \in A_k$.
\end{itemize}
\begin{sloppypar}
\noindent The notions of \EQ{}, \EQ{1}. \EF{}, and \EF{1} were formulated in the context of resource allocation by \citet{DS61cut}, \citet{FSV+19equitable}, \citet{F67resource}, and \citet{B11combinatorial}, respectively.\footnote{\citet{LMM+04approximately} studied a weaker approximation of envy-freeness than \EF{1}, but their algorithm is known to compute an \EF{1} allocation.}
\end{sloppypar}

Notice that equitability and envy-freeness (and their corresponding relaxations) coincide when agents have \emph{identical} valuations (i.e., if $u_i=u_k$ for every $a_i,a_k \in \N$) but are incomparable in general. Although our focus in this paper is on (approximate) equitability, some of our results also have implications for (approximate) envy-freeness.

\textbf{Efficiency notions:}
An allocation $A$ is said to be
\begin{itemize}
    \item \emph{Pareto optimal} $(\PO{})$ if for no other connected allocation $B$, we have $u_i(B_i) \geq u_i(A_i)$ for every agent $a_i$, with at least one of the inequalities being strict, and 
    \item \emph{non-wasteful} $(\NW{})$ if for any good $v$, there exists some agent $a_i$ such that $v \in A_i$ and $u_i(\{v\}) > 0$.\footnote{To make this notion well-defined, we will assume throughout that in any given instance, for every good there is at least one agent with a non-zero value for it. This assumption is without loss of generality as our negative results (pertaining to computational hardness and non-existence) hold even under this assumption, and our positive results (algorithms and existence results) do not need this assumption.}
\end{itemize}
The \emph{utilitarian welfare} of $A$ is the sum of utilities of all agents in $A$, i.e., $\sum_{a_i \in \N} u_i(A_i)$, and the \emph{egalitarian welfare} of $A$ is the utility of the least happy agent, i.e., $\min_{a_i \in \N} u_i(A_i)$.

Non-wastefulness and Pareto optimality are, in general, incomparable notions even when $G$ is a path.\footnote{Consider three goods $v_1,v_2,v_3$ on a path and two agents with valuations $u_1 = (1,10,0)$ and $u_2 = (10,1,1)$. The allocation $A \coloneqq (\{v_1\},\{v_2,v_3\})$ is non-wasteful but is Pareto dominated by the (wasteful) allocation $B \coloneqq (\{v_2,v_3\},\{v_1\})$.}
However, for binary valuations, $\NW{} \Rightarrow \PO{} \Rightarrow$ complete (since, for binary valuations, a non-wasteful allocation maximizes the utilitarian social welfare and is therefore Pareto optimal), and there are simple examples where these implications are strict.

\textbf{Connected fair division problem:} The input to this problem is a tuple $\I = \langle G, \N, \U \rangle$ consisting of a graph $G$, a set of agents $\N$, and a valuation profile $\U$. The goal is to determine whether $\I$ admits a \emph{connected} allocation satisfying the desired notions of fairness and efficiency. Notice that if $G$ is a clique, we recover the standard fair division model without the connectedness constraint. In this work, we will exclusively focus on the case where $G$ is a path graph.

\emph{$(a,b)$-sparse instances:} Given any $1 \leq a \leq m$ and $1 \leq b \leq n$, we say that an instance with binary valuations is \emph{$(a,b)$-sparse} if each agent approves at most $a$ goods and each good is approved by at most $b$ agents.

\section{Hardness Results for \EQ{1} and Efficient Allocations}
\label{sec:Hardness_results}

Note that in the absence of the connectedness constraint, a non-wasteful allocation can be easily computed by assigning each good to an agent that has a positive value for it. 
By contrast, connectedness poses a substantial computational challenge even when we are only looking to satisfy non-wastefulness (without any fairness constraints), as the problem turns out to be \NPC{} (\Cref{thm:NW_NPcomplete}).

\begin{figure*}
    \centering
    \begin{center}
    \begin{tikzpicture}
    \tikzmath{\x=0;\edgex=0;\edgewidth=0.18;\nodesep=0.65;\edgenodesep=0.23;\yoffset=0;\xoffset=0;}
    \draw (\x,0) circle [radius=7pt] node [] {\scriptsize{$U^{}_1$}};
    \tikzmath{\edgex=\x+\edgenodesep;}
    \tikzmath{\x=\x+\nodesep;}
    \draw (\edgex,0) -- (\edgex+\edgewidth,0);
    \draw (\x,0) circle [radius=7pt] node [] {\scriptsize{$V^{}_1$}};
    \tikzmath{\edgex=\x+\edgenodesep;}
    \tikzmath{\x=\x+\nodesep;}
    \draw (\edgex,0) -- (\edgex+\edgewidth,0);
    \draw (\x,0) circle [radius=7pt] node [] {\scriptsize{$U_1^\pr$}};
    \tikzmath{\edgex=\x+\edgenodesep;}
    \tikzmath{\x=\x+\nodesep;}
    \draw (\edgex,0) -- (\edgex+\edgewidth,0);
    \draw (\x,0) circle [radius=7pt] node [] {\scriptsize{$V_1^\pr$}};
    \tikzmath{\edgex=\x+\edgenodesep;}
    \tikzmath{\x=\x+\nodesep;}
    \draw (\edgex,0) -- (\edgex+\edgewidth,0);
    \draw (\x,0) node [] {\scriptsize{$\dots$}};
    \tikzmath{\edgex=\x+\edgenodesep;}
    \tikzmath{\x=\x+\nodesep;}
    \draw (\edgex,0) -- (\edgex+\edgewidth,0);
    \draw (\x,0) circle [radius=7pt] node [] {\scriptsize{$U^{}_p$}};
    \tikzmath{\edgex=\x+\edgenodesep;}
    \tikzmath{\x=\x+\nodesep;}
    \draw (\edgex,0) -- (\edgex+\edgewidth,0);
    \draw (\x,0) circle [radius=7pt] node [] {\scriptsize{$V^{}_p$}};
    \tikzmath{\edgex=\x+\edgenodesep;}
    \tikzmath{\x=\x+\nodesep;}
    \draw (\edgex,0) -- (\edgex+\edgewidth,0);
    \draw (\x,0) circle [radius=7pt] node [] {\scriptsize{$U_p^\pr$}};
    \tikzmath{\edgex=\x+\edgenodesep;}
    \tikzmath{\x=\x+\nodesep;}
    \draw (\edgex,0) -- (\edgex+\edgewidth,0);
    \draw (\x,0) circle [radius=7pt] node [] {\scriptsize{$V_p^\pr$}};
    \tikzmath{\edgex=\x+\edgenodesep;}
    \tikzmath{\x=\x+\nodesep;}
    \draw (\edgex,0) -- (\edgex+\edgewidth,0);
    \draw[fill=black!10] (\x,0) circle [radius=7pt] node [] {\scriptsize{$S^{}_0$}};
    \tikzmath{\edgex=\x+\edgenodesep;}
    \tikzmath{\x=\x+\nodesep;}
    \draw (\edgex,0) -- (\edgex+\edgewidth,0);
    \draw (\x,0) circle [radius=7pt] node [] {\scriptsize{$C_1^L$}};
    \tikzmath{\edgex=\x+\edgenodesep;}
    \tikzmath{\x=\x+\nodesep;}
    \draw (\edgex,0) -- (\edgex+\edgewidth,0);
    \draw[fill=black!10] (\x,0) circle [radius=7pt] node [] {\scriptsize{$S^{}_1$}};
    \tikzmath{\edgex=\x+\edgenodesep;}
    \tikzmath{\x=\x+\nodesep;}
    \draw (\edgex,0) -- (\edgex+\edgewidth,0);
    \draw (\x,0) circle [radius=7pt] node [] {\scriptsize{$C_1^R$}};
    \tikzmath{\edgex=\x+\edgenodesep;}
    \tikzmath{\x=\x+\nodesep;}
    \draw (\edgex,0) -- (\edgex+\edgewidth,0);
    \draw (\x,0) circle [radius=7pt] node [] {\scriptsize{$C_2^L$}};
    \tikzmath{\edgex=\x+\edgenodesep;}
    \tikzmath{\x=\x+\nodesep;}
    \draw (\edgex,0) -- (\edgex+\edgewidth,0);
    \draw[fill=black!10] (\x,0) circle [radius=7pt] node [] {\scriptsize{$S^{}_2$}};
    \tikzmath{\edgex=\x+\edgenodesep;}
    \tikzmath{\x=\x+\nodesep;}
    \draw (\edgex,0) -- (\edgex+\edgewidth,0);
    \draw (\x,0) circle [radius=7pt] node [] {\scriptsize{$C_2^R$}};
    \tikzmath{\edgex=\x+\edgenodesep;}
    \tikzmath{\x=\x+\nodesep;}
    \draw (\edgex,0) -- (\edgex+\edgewidth,0);
    \draw (\x,0) node [] {\scriptsize{$\dots$}};
	%
    %
    \tikzmath{\edgex=\x+\edgenodesep-\xoffset;}
    \tikzmath{\x=\x+\nodesep-\xoffset;}
    \draw (\edgex,\yoffset) -- (\edgex+\edgewidth,\yoffset);
    \draw (\x,\yoffset) circle [radius=7pt] node [] {\scriptsize{$C_p^L$}};
    \tikzmath{\edgex=\x+\edgenodesep;}
    \tikzmath{\x=\x+\nodesep;}
    \draw (\edgex,\yoffset) -- (\edgex+\edgewidth,\yoffset);
    \draw[fill=black!10] (\x,\yoffset) circle [radius=7pt] node [] {\scriptsize{$S_p$}};
    \tikzmath{\edgex=\x+\edgenodesep;}
    \tikzmath{\x=\x+\nodesep;}
    \draw (\edgex,\yoffset) -- (\edgex+\edgewidth,\yoffset);
    \draw (\x,\yoffset) circle [radius=7pt] node [] {\scriptsize{$C_p^R$}};
    \tikzmath{\edgex=\x+\edgenodesep;}
    \tikzmath{\x=\x+\nodesep;}
    \draw (\edgex,\yoffset) -- (\edgex+\edgewidth,\yoffset);
    \draw (\x,\yoffset) circle [radius=7pt] node [] {\scriptsize{$D_1^{}$}};
    \tikzmath{\edgex=\x+\edgenodesep;}
    \tikzmath{\x=\x+\nodesep;}
    \draw (\edgex,\yoffset) -- (\edgex+\edgewidth,\yoffset);
    \draw (\x,\yoffset) circle [radius=7pt] node [] {\scriptsize{$D_1^\pr$}};
    \tikzmath{\edgex=\x+\edgenodesep;}
    \tikzmath{\x=\x+\nodesep;}
    \draw (\edgex,\yoffset) -- (\edgex+\edgewidth,\yoffset);
    \draw (\x,\yoffset) node [] {\scriptsize{$\dots$}};
    \tikzmath{\edgex=\x+\edgenodesep;}
    \tikzmath{\x=\x+\nodesep;}
    \draw (\edgex,\yoffset) -- (\edgex+\edgewidth,\yoffset);
    \draw (\x,\yoffset) circle [radius=7pt] node [] {\scriptsize{$D_p^{}$}};
    \tikzmath{\edgex=\x+\edgenodesep;}
    \tikzmath{\x=\x+\nodesep;}
    \draw (\edgex,\yoffset) -- (\edgex+\edgewidth,\yoffset);
    \draw (\x,\yoffset) circle [radius=7pt] node [] {\scriptsize{$D_p^\pr$}};
    \end{tikzpicture}
    \end{center}
\caption{The instance used in the proof of \Cref{thm:NW_NPcomplete}.
}
\label{fig:NW_goods_order}
\end{figure*}

\begin{restatable}[]{theorem}{NWNPcomplete}
 \label{thm:NW_NPcomplete}

 Determining whether there exists a connected non-wasteful allocation is \NPC{} for a path and a $(4,4)$-sparse binary valuations instance.
\end{restatable}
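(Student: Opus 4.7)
Membership in \NP{} is immediate: a connected allocation $A = (A_1,\dots,A_n)$ is a certificate of polynomial size, and checking that every $A_i$ induces a path subgraph, that the $A_i$'s are pairwise disjoint, and that every good $v \in A_i$ satisfies $u_i(\{v\}) = 1$ can be done in polynomial time.

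For hardness, the plan is to reduce from a constrained variant of satisfiability with bounded occurrences, most naturally \TPTNThreeSat{} (3-SAT in which each variable occurs twice positively and twice negatively), which guarantees the $(4,4)$-sparsity bound essentially ``for free.'' Given such a formula with $p$ variables $x_1,\dots,x_p$ and $q$ clauses, I will lay out the vertices of the path in three consecutive regions as in \Cref{fig:NW_goods_order}: a \emph{variable region} consisting of quadruples $(U_i, V_i, U_i', V_i')$ for each $i \in [p]$, a \emph{clause region} consisting of alternating separators $S_0, S_1, \dots, S_q$ and literal-slots $C_j^L, C_j^R$, and a \emph{padding region} with pairs $(D_i, D_i')$. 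The agent set will contain one \emph{variable agent} per variable (approving exactly the four items $U_i, V_i, U_i', V_i'$, which forces a natural binary choice between ``taking $U_i V_i$ together'' versus ``taking $V_i U_i'$ together''), one \emph{clause agent} per clause (approving the items in the clause region corresponding to the at most three literals of that clause, placed into the slots $C_j^L$, $C_j^R$ and linked across the separator $S_j$), plus auxiliary \emph{separator agents} valuing the shaded $S_j$'s and \emph{padding agents} valuing the $D_i, D_i'$. The $D_i, D_i'$ block at the end guarantees that there are enough agents so that every non-separator good must be absorbed by a neighbour, which is what lets the variable choices ``propagate'' to the clause region.

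The equivalence will be argued as follows. Because valuations are binary, non-wastefulness is equivalent to assigning every good to some approving agent, and together with connectedness this forces each agent's bundle to be a maximal interval of goods she values. First, I will show that the variable agent for $x_i$ has only two admissible connected bundles inside her quadruple, corresponding to the truth values of $x_i$; this pushes exactly one of $U_i'$ or $V_i$ out of her bundle, and these ``leftover'' vertices become the only way to thread connectedness into the clause region. Second, I will argue that the separator goods $S_j$, together with the clause agent structure around each $(C_j^L, S_j, C_j^R)$ triple, allow a non-wasteful completion if and only if at least one of the literal slots of clause $C_j$ is ``freed'' by the variable agents' choices---i.e., if and only if some literal of $C_j$ is true under the induced assignment. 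Third, the padding $(D_i, D_i')$ region is designed so that the total number of agents matches the total number of intervals forced by the structure, closing the counting argument. Conversely, from a satisfying assignment I will exhibit a connected non-wasteful allocation by making each variable agent take the interval consistent with her truth value, and assigning each clause agent the literal slot corresponding to a true literal.

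The main obstacle is the last equivalence---showing that any non-wasteful connected allocation of the clause region must correspond to a \emph{consistent} satisfying assignment. The difficulty is that connectedness is a global property: locally a clause gadget might look satisfiable by stealing an item whose ``truth value'' is incompatible with what the variable gadget assigns elsewhere on the path. I will handle this by showing, via a left-to-right sweep along the path, that once the variable region is fixed, the separators $S_0, \dots, S_q$ cut the remaining path into rigid pieces each of which admits a non-wasteful covering only by the clause agent whose literal is made true by the variable agents' choices. A careful accounting will also be needed to keep each agent's approval list at size at most $4$ and each good's approver count at $4$; the choice of reducing from \TPTNThreeSat{}, together with the use of dedicated separator and padding agents for goods that would otherwise have higher degree, is exactly what makes this bound tight.
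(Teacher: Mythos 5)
Your high-level architecture (variable region, separator-punctuated clause region, padding block) matches the paper's layout in \Cref{fig:NW_goods_order}, but the mechanisms that make the reduction actually work are missing, and as described the equivalence would fail. The central problem is that you have no agent whose approval set straddles the variable region and the clause region. Your variable agents approve only the quadruple $U_i, V_i, U_i', V_i'$ and your clause agents approve only clause-region slots, so a connected non-wasteful allocation could cover the two regions completely independently of one another; the truth assignment read off the variable gadgets would place no constraint whatsoever on which clause slots get covered. The paper's construction solves exactly this with its \emph{shadow agents}: each one approves a single core good in the variable region \emph{and} the pair $C_j^L, C_j^R$ in the clause region, and since these lie far apart on the path, connectedness forces each shadow agent to commit to one region. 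A global counting argument ($2p$ auxiliary goods, $4p$ shadow agents, $2p$ core goods left over for them) then forces exactly half the shadow agents into each region, which is what transmits consistency. You identify this as ``the main obstacle'' but the proposed fix (a left-to-right sweep over rigid separator-cut pieces) does not create the missing cross-region coupling.

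Two further gaps. First, your variable gadget does not force a binary choice: an agent approving four consecutive goods can take any sub-interval of them (including all four or just one), and nothing in your sketch pins her to ``$U_iV_i$'' versus ``$V_iU_i'$''. The paper instead gives each main agent two adjacent core goods \emph{plus} two adjacent dummy goods placed far away; connectedness then genuinely forces an either/or choice, and the $2p$ dummy goods force exactly $p$ main agents to be ``unlucky''. Relatedly, your claim that non-wastefulness plus connectedness forces each bundle to be a \emph{maximal} interval of approved goods is false; non-wastefulness only requires every good to reach \emph{some} approver. Second, the choice of \TPTNThreeSat{} as the source problem is a poor fit: non-wastefulness is an exact-covering condition (every good must be absorbed), and the reverse direction needs tight counting. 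A clause with two or three true literals under ``at least one'' semantics breaks the good/agent bookkeeping. The paper reduces from \LES{} precisely because its ``exactly one literal true in each core clause, exactly two in each auxiliary clause'' structure matches the exact-covering arithmetic. I would recommend reworking the reduction around linking agents and a source problem with exact satisfaction counts before attempting the equivalence proof.
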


To prove \Cref{thm:NW_NPcomplete}, we will show a reduction from a structured version of \textsc{Satisfiability} called \LESfull{} (\LES{}) which is known to be \NPC{}~\citep{DM19deleting}. An instance of \LES{} consists of $5p$ clauses (where $p \in \mathbb{N}$) denoted as follows:
$$ \CC = \{U_1, V_1, U_1^\pr, V_1^\pr, \cdots, U_p, V_p, U_p^\pr, V_p^\pr\} \cup \{ C_1, \cdots, C_p \}.$$
We will refer to the first $4p$ clauses as the \emph{core} clauses, and the remaining clauses as the \emph{auxiliary} clauses. The set of variables consists of $p$ \emph{main variables} $x_1,\dots,x_p$ and $4p$ \emph{shadow variables} $y_1,\dots,y_{4p}$. 

Each core clause consists of two literals and has the following structure:
$$\forall \, i \in [p], U_i^{} \cap V_i^{} = \{x_i\} \mbox{ and } U_i^\pr \cap V_i^\pr = \{\bar{x}_i\}.$$
Each main variable $x_i$ occurs exactly twice as a positive literal and exactly twice as a negative literal. The main variables only occur in the core clauses. Each shadow variable makes two appearances: as a positive literal in an auxiliary clause and as a negative literal in a core clause. Each auxiliary clause consists of four literals, each corresponding to a positive occurrence of a shadow variable. 

The \LES{} problem asks whether, given a set of clauses with the aforementioned structure, there exists an assignment $\tau$ of truth values to the variables such that \emph{exactly one} literal in every core clause and \emph{exactly two} literals in every auxiliary clause evaluate to \true{} under $\tau$.

\begin{proof} (of \Cref{thm:NW_NPcomplete}) Let $\phi$ be an instance of \LES{}. We will begin with a description of the reduced instance.

\textbf{Goods}: Introduce one good for every core clause, denoted by $U_i^{}$, $V_i^{}$, $U_i^\pr$, and $V_i^\pr$, and two goods for every auxiliary clause, denoted by $C_i^L$ and $C_i^R$. We refer to these as \emph{core} and \emph{auxiliary} goods, respectively. We also introduce $2p$ \emph{dummy} goods 
 $D_1^{},D_1^\pr, \ldots, D_p^{}, D_p^\pr$ as well as $p+1$ \emph{separator} goods $S_0, S_1, \ldots, S_p$. Thus, the total number of goods is $m = 4p+2p+2p+(p+1) = 9p+1$. The goods are arranged as shown in \Cref{fig:NW_goods_order}.

\textbf{Agents}: For every main variable $x_i$, we will introduce two agents $a_{x_i}$ and $a_{\bar{x}_i}$ for the two literals; these are referred to as \emph{main agents} of the \emph{positive} and \emph{negative} type, respectively. For every $i \in [p]$, the agent $a_{x_i}$ approves (i.e., values at $1$) the goods $U_i^{}, V_i^{}, D_i^{}, D_i^\pr$, while the agent $a_{\bar{x_i}}$ approves the goods $U_i^\pr, V_i^\pr, D_i^{}, D_i^\pr$. We also introduce a \emph{shadow agent} for every shadow variable. If $y$ is a shadow variable occurring in the core clause $U_i^{}$ and auxiliary clause $C_j^{}$, then the shadow agent corresponding to $y$ approves the goods $U_i^{}$, $C_j^L$, and $C_j^R$. The set of goods approved by $y$ is analogously defined if it appears in the core clauses $V_i^{}$, $U_i^\pr$, or $V_i^\pr$. Finally, we introduce $p+1$ \emph{separator} agents $s_0, s_1, \ldots, s_p$ such that for every $i \in \{0,1,\dots,p\}$, $s_i$ only approves the separator good $S_i$. Thus, the total number of agents is $n = 2p+4p+(p+1) = 7p+1$. Observe that the constructed instance is $(4,4)$-sparse. 
We now turn to the proof of equivalence of the two instances.

\textbf{The Forward Direction.} Let $\tau$ be a satisfying assignment for the \LES{} instance. We will construct the desired allocation as follows: For every $i \in [p]$, if the main variable $x_i$ evaluates to \true{} (i.e., if $\tau(x_{i}) = 1$), then assign $U_{i}^{}$ and $V_{i}^{}$ to agent $a_{x_{i}}$, $D_{i}^{}$ and $D_{i}^\pr$ to agent $a_{\bar{x}_{i}}$, and $U_{i}^\pr$ and $V_{i}^\pr$ to the (unique) shadow agents that approve these goods. Otherwise, if $\tau(x_i) = 0$, then assign $U_{i}^\pr$ and $V_{i}^\pr$ to agent $a_{\bar{x}_{i}}$, $D_{i}^{}$ and $D_{i}^\pr$ to agent $a_{x_{i}}$, and $U_{i}^{}$ and $V_{i}^{}$ to the (unique) shadow agents that approve these goods. Additionally, for every $i \in \{0,\dots,p\}$, assign $S_{i}$ to the agent $s_i$. Finally, for every $i \in [p]$, assign the goods $C_{i}^{L}$ and $C_{i}^{R}$ to the two shadow agents whose corresponding literals satisfy 
the auxiliary clause $C_{i}$.

The above allocation assigns each good to an agent that approves it and is therefore non-wasteful. It is also easy to see that the allocation is connected: The only agents that receive more than one good under this allocation are the main agents, and they always receive either two adjacent core goods or two adjacent dummy goods.

\textbf{The Reverse Direction.} We will now show how to  
recover an \LES{} assignment given a connected and non-wasteful allocation $A$ for the fair division instance.

Observe that due to non-wastefulness, each separator good is assigned to a unique separator agent, and 
the separator agents are not assigned any other goods.
Thus, for every $i \in \{0,1,\dots,p\}$, $A_{s_i} = \{S_i\}$. Similarly, the $2p$ dummy goods $D_1^{},D_1^\pr, \dots,D_p^{},D_p^\pr$ must be allocated among \emph{at least} $p$ main agents, which leaves at most $p$ main agents for receiving the core goods. Furthermore, the separator goods prevent any shadow agent from getting more than one auxiliary good. Thus, the $2p$ auxiliary goods are assigned to \emph{exactly} $2p$ shadow agents, leaving the other $2p$ shadow agents for receiving the core goods.

Since each core good is approved by a unique shadow agent, at most $2p$ core goods can be allocated among shadow agents. Thus, the remaining $2p$ (or more) core goods should go to the main agents. However, due to non-wastefulness, a main agent cannot get more than two core goods. Overall, this means that one set of $p$ main agents gets exactly two core goods each (the ``lucky'' agents), while the other set of $p$ main agents gets two dummy goods each (the ``unlucky'' agents). Notice that the two main agents corresponding to a main variable cannot both be lucky (since that would leave one or more dummy goods unassigned), nor can both be unlucky (as that would create a similar violation for the core goods).

This brings us to a natural way of deriving an assignment $\tau$ from the allocation $A$. If the main agent of the positive (respectively, negative) type is unlucky, then we let $\tau(x_i) = 0$ (respectively, $\tau(x_i) = 1$). Furthermore, if $A$ allocates a core good to a shadow agent, then the corresponding shadow variable is set to $0$, while shadow variables corresponding to shadow agents who receive auxiliary goods are set to $1$. Note that exactly $2p$ of the $4p$ shadow variables are set to $1$.  
It can be verified that $\tau$ is indeed a satisfying assignment.

\end{proof}
Notice that the allocation obtained in the forward direction in the proof of Theorem~\ref{thm:NW_NPcomplete} is \EQ{1} and \EF{1}, and the argument for the reverse direction is driven only by non-wastefulness. Thus, we also obtain hardness results for \EQ{1}+\NW{} and \EF{1}+\NW{} allocations. Additionally, for binary additive valuations, an allocation is non-wasteful if and only if its utilitarian welfare is at least~$m$. These observations establish the hardness of a number of related problems.
\begin{corollary}
Checking the existence of a connected allocation that is (a) \EQ{1} and \NW{}, (b) \EF{1} and \NW{}, (c) \EQ{1} and has utilitarian welfare at least $m$, or (d) \EF{1} and has utilitarian welfare at least $m$ is \NPC{} for a path and a $(4,4)$-sparse binary valuations instance.
\label{cor:EQ1_NW_Egal_Util}
\end{corollary}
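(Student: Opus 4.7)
The plan is to reuse the reduction from \LES{} in the proof of \Cref{thm:NW_NPcomplete} verbatim, and derive all four hardness claims by layering two additional observations on top of that proof. Membership in \NP{} is immediate for all four problems since \EQ{1}, \EF{1}, \NW{}, and the utilitarian welfare bound can each be checked in polynomial time given an allocation as a witness, so the entire burden is to establish \NPHness{}.

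For the forward direction, I would verify that the connected non-wasteful allocation constructed from a satisfying \LES{} assignment in the proof of \Cref{thm:NW_NPcomplete} is simultaneously \EQ{1} and \EF{1}. Under that construction, each main agent receives exactly two approved goods (two adjacent core goods or the two dummy goods), each shadow agent receives exactly one approved good (either a single core good or a single auxiliary good), and each separator agent receives its unique approved good; so every agent has utility in $\{1,2\}$. For any pair $(a_i,a_k)$ with $u_i(A_i)=1$ and $u_k(A_k)=2$, removing any approved good from $A_k$ drops $a_k$'s utility to $1 \leq u_i(A_i)$, which together with the trivial cases yields \EQ{1}. For \EF{1}, the $(4,4)$-sparseness guarantees that each agent approves only a constant number of goods placed in specific positions on the path, and a brief case split over agent types (main, shadow, separator) shows that for any agent $a_i$ and any foreign bundle $A_k$, one can choose $v \in A_k$ so that $u_i(A_k \setminus \{v\}) \leq u_i(A_i)$.

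For the reverse direction, the key observation is that the analysis in the proof of \Cref{thm:NW_NPcomplete} recovered a satisfying \LES{} assignment using only the non-wastefulness of the allocation. Hence any connected \EQ{1}+\NW{} or \EF{1}+\NW{} allocation is in particular \NW{} and yields a satisfying assignment by exactly the same argument, establishing claims (a) and (b). For claims (c) and (d), I would invoke the standard fact that under binary additive valuations the utilitarian welfare of any allocation is at most $m$, with equality if and only if every good is both assigned and approved by its recipient---equivalently, if and only if the allocation is \NW{} in the sense of this paper. Thus requiring utilitarian welfare at least $m$ is equivalent to requiring non-wastefulness, and (c), (d) reduce to (a), (b) respectively.

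The only step requiring any real care is the \EF{1} verification in the forward direction; however, the rigid structure imposed by the construction together with $(4,4)$-sparseness keeps the case analysis short and mechanical, so I do not anticipate a genuine obstacle.
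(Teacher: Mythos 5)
Your proposal is correct and follows essentially the same route as the paper: observe that the forward-direction allocation in the proof of \Cref{thm:NW_NPcomplete} is already \EQ{1} and \EF{1} (every agent gets utility in $\{1,2\}$ and every bundle has at most two goods), that the reverse direction uses only non-wastefulness, and that under binary additive valuations non-wastefulness is equivalent to utilitarian welfare at least $m$. The paper states these three observations more tersely, but the argument is identical.
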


\begin{remark}
The hardness result in \Cref{cor:EQ1_NW_Egal_Util} can be extended to \emph{multiplicative} approximations of \EQ{1} and \EF{1}. Given any $\alpha \in [0,1]$, an allocation $A$ is said to satisfy $\alpha$-\EQ{1} (respectively, $\alpha$-\EF{1}) if for every pair of agents $a_i, a_k \in \N$ such that $A_k \neq \emptyset$, there exists some good $v \in A_k$ such that $u_i(A_i) \geq \alpha \cdot u_k(A_k \setminus \{v\})$ (respectively, $u_i(A_i) \geq \alpha \cdot u_i(A_k \setminus \{v\})$).\footnote{Similar approximations have been studied in the context of envy-freeness up to any good (\EFX{})~\citep{PR20almost,AMN20multiple}.} The reasoning is similar: The allocation in the forward direction is \EQ{1} as well as \EF{1}, and hence also $\alpha$-\EQ{1} and $\alpha$-\EF{1}. The argument in the reverse direction only uses non-wastefulness, and therefore vacuously holds for $\alpha$-\EQ{1} (or $\alpha$-\EF{1}). As a result, we obtain that for any rational $\alpha \in [0,1]$, it is \NPC{} to determine the existence of a connected $\alpha$-\EQ{1} (or $\alpha$-\EF{1}) allocation that is non-wasteful or has utilitarian welfare at least $m$.
\end{remark}

A straightforward adaptation of the construction in \Cref{thm:NW_NPcomplete} also gives us the following:
\begin{restatable}[]{theorem}{EQonePO}
Checking the existence of a connected allocation that is (a) \EQ{1} and \PO{}, (b) \EF{1} and \PO{}, (c) \EQ{1} and has egalitarian welfare at least $2$, or (d) \EF{1} and has egalitarian welfare at least $2$ is \NPC{} for a path and a $(6,4)$-sparse binary valuations instance.
\label{thm:EF1_EQ1_PO}
\end{restatable}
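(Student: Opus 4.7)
The plan is to attach a private "booster" good to each good whose approver would otherwise end with utility $1$ in the intended allocation of \Cref{thm:NW_NPcomplete}, so that everyone can be pushed to utility exactly $2$ without perturbing the \LES{} logic. Concretely, for each separator $S_i$ add an adjacent good $G_i$ approved only by $s_i$; for each core good $K \in \{U_i, V_i, U_i^\pr, V_i^\pr\}$ add an adjacent good $F_K$ approved only by the unique shadow agent whose literal appears negatively in $K$; and for each auxiliary good $C_j^L$ (resp.\ $C_j^R$) add an adjacent good $F_{C_j^L}$ (resp.\ $F_{C_j^R}$) approved by all four shadow agents with literals in $C_j$. Arrange the path so each original good sits next to its booster(s), with the pair $(S_j, G_j)$ still sandwiched between the $C_j^L$--side and $C_j^R$--side goods of each auxiliary clause, so that separators continue to block a single shadow from grabbing both auxiliary pairs. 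Each main agent then approves $4$ goods, each shadow $6$, each separator $2$, and every good is approved by at most $4$ agents; the instance is $(6,4)$-sparse, matching the statement.

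For the forward direction, expand every bundle from the proof of \Cref{thm:NW_NPcomplete} with its adjacent boosters: $s_i$ takes $\{S_i, G_i\}$; each main takes its old core (resp.\ dummy) pair, absorbing stray boosters when needed for connectivity; each shadow that was assigned the core good $K$ takes $\{K, F_K\}$; and each shadow that was assigned $C_j^L$ (resp.\ $C_j^R$) takes $\{C_j^L, F_{C_j^L}\}$ (resp.\ $\{F_{C_j^R}, C_j^R\}$). Every agent now has utility exactly $2$, so the allocation is trivially \EQ{1} and \EF{1} with egalitarian welfare $2$. Pareto optimality follows because, once every agent is at utility $\geq 2$, no shadow can exceed $2$ without swallowing some $(S_j, G_j)$ pair and starving $s_j$; the allocation therefore attains the maximum utilitarian welfare among egalitarian-$\geq 2$ allocations and is \PO{}. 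This single allocation simultaneously witnesses the ``yes'' side of all four variants (a)--(d).

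For the reverse direction, let $A$ be a connected allocation satisfying \EQ{1} (or \EF{1}) together with either \PO{} or egalitarian welfare $\geq 2$. I would first argue that every agent has utility exactly $2$. Under the egalitarian constraint this is immediate: each separator approves exactly $\{S_i, G_i\}$, so must take both; the pair then acts as a barrier capping every shadow's bundle at utility $2$, and the lower bound $\geq 2$ forces equality for every agent. Under the \PO{} constraint one uses that \EQ{1} together with $\{0,1\}$--valuations caps the maximum utility at the minimum plus one, so any allocation dipping below $2$ has all utilities at most $2$ and is weakly Pareto-dominated by the canonical utility-$2$ allocation from the forward direction, contradicting \PO{} (when the canonical allocation itself does not exist, i.e.\ \LES{} is unsatisfiable, the structural argument below shows no utility-$2$ allocation exists at all, yielding the same conclusion). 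With every utility pinned at $2$, each main has taken its core pair or its dummy pair, each shadow has taken one of its three approved pairs (core-booster pair or either aux-booster pair), and each separator has taken $\{S_i, G_i\}$. Reading off the main-variable and shadow-variable assignments as in the proof of \Cref{thm:NW_NPcomplete} then produces an \LES{} satisfying assignment, provided we can rule out ``extended'' shadow bundles that swallow a neighbour's core good (and would otherwise yield a derived assignment violating some core clause). This is the main obstacle, and it is overcome by a tight pigeonhole on the $2p$ auxiliary pair-slots: any such extension strictly increases the number of shadows competing for those slots beyond $2p$, forcing some shadow's utility below $2$ and contradicting \EQ{1} together with the efficiency constraint.
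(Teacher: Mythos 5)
Your construction is genuinely different from the paper's (which interleaves dedicated ``shadow goods'' between the core goods, doubles the auxiliary goods, and adds a special agent with three private goods at the \emph{end} of the path), but the forward direction and the pigeonhole count on auxiliary slots are in the same spirit, and your argument for parts (c) and (d) --- where egalitarian welfare $\geq 2$ directly pins every utility at exactly $2$ via the separator barriers --- looks sound.

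The gap is in the reverse direction for the \PO{} variants, parts (a) and (b). You need to show that \emph{any} connected \EQ{1}+\PO{} (resp.\ \EF{1}+\PO{}) allocation has all utilities equal to $2$ before the structural reading-off can begin, and your argument for this is circular: you dominate a ``dipping-below-$2$'' allocation by the canonical all-$2$s allocation, but that allocation exists only when \LES{} is satisfiable, which is exactly what the reverse direction must establish. The parenthetical fallback (``the structural argument shows no utility-$2$ allocation exists'') does not close this, because when \LES{} is unsatisfiable you must rule out \EQ{1}+\PO{} allocations with utility profile in $\{1,2\}$ (or $\{0,1\}$), and nothing in your instance forces such an allocation to be Pareto-dominated --- your boosters sit in the \emph{interior} of the path, so even the standard ``reassign an unvalued good to its unique approver'' improvement can break connectivity. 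This is precisely why the paper plants a special agent whose three private goods lie at the end of the path: \PO{} unconditionally hands that agent utility $3$, and \EQ{1} then forces everyone else to at least $2$ with no case analysis. Moreover, for part (b) your utility-pinning step relies on ``\EQ{1} caps the maximum at the minimum plus one,'' which is an \EQ{1}-specific fact with no \EF{1} analogue; the paper handles \EF{1}+\PO{} with a separate construction and a dedicated lemma about the dummy-good triplets. As written, your proposal establishes (c) and (d) but not (a) or (b).
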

The proof of \Cref{thm:EF1_EQ1_PO} is presented in the Appendix \ref{sec:Proof_EF1_EQ1_PO}.

Recently, \citet[Theorem 7]{IP19Pareto} have shown \NPHness{} of checking the existence of a connected \EF{1}+\PO{} allocation of a path even for binary valuations. Their construction involves items that are valued by \emph{all} agents, thus requiring $\O(n)$ sparsity. By contrast, our result in \Cref{thm:EF1_EQ1_PO} shows hardness even for $\O(1)$ sparse instances. Finally, we note that the proof of \Cref{thm:NW_NPcomplete} can also be adapted to show \NPHness{} for egalitarian or utilitarian-optimal \EQ{1} allocations of \emph{chores} (the relevant transformation is $u'_{i,j} = u_{i,j} - 1$).\footnote{For negatively-valued items (or chores), an allocation is said to satisfy \EQ{1} if for every pair of agents $a_i, a_k \in \N$ such that $A_i \neq \emptyset$, there exists a chore $v \in A_i$ such that $v_i(A_i \setminus \{v\}) \geq v_k(A_k)$~\citep{FSV+20equitablechores}.}

\section{Algorithmic Results for Complete \EQ{1} Allocations}
\label{sec:EQ1_Complete_Algorithms}

The intractability results in the previous section
 prompt us to relax the efficiency requirement in search of positive results, and ask the following question: Does there always exist a connected and \emph{complete} \EQ{1} allocation of a path?

A natural approach towards this question is to start with a connected and \emph{exactly} equitable division in a cake-cutting instance derived by relaxing the indivisibility constraint (such divisions are guaranteed to exist~\citep{CDP13existence,AD15efficiency,C17existence}). The fractional cake division could then be rounded to obtain a connected and \emph{approximately} equitable allocation of indivisible goods. 
Unfortunately, there exist instances where \emph{every} rounding of the fractional cake division fails to satisfy \EQ{1}.\footnote{Consider an instance with seven goods $v_1,\dots,v_7$ and three agents with identical valuations $u = (1,1,1,1,1,1,12)$. \emph{Any} connected and equitable division assigns $v_1,\dots,v_6$ to one agent and equally divides $v_7$ between the other two. 
In any rounding, some agent will get an empty bundle, thus violating \EQ{1}.\label{footnote:Cake_rounding_fails_EQ1}}

An alternative approach is to work directly with the indivisible goods instance. For a path graph, any connected allocation can be naturally associated with a left-to-right ordering of agents, say $\sigma$. We call a connected (partial) allocation \emph{$\sigma$-consistent} if it assigns connected bundles from left to right according to $\sigma$. \citet{S19fairly} has shown that there is a polynomial-time local search algorithm that, for any fixed ordering $\sigma$ of agents, finds a connected, complete, $\sigma$-consistent, and approximately equitable allocation. Specifically, his algorithm computes a $u_{\max}$-\EQ{} allocation, where $u_{\max} \coloneqq \max_{a_i \in \N, v \in V} u_i(\{v\})$ is the highest valuation any agent has for any good, and an allocation $A$ is $u_{\max}$-\EQ{} if for every $a_i,a_k \in \N$, we have $|u_i(A_i) - u_k(A_k)| \leq u_{\max}$.

Notice that $u_{\max}$-\EQ{} is a strictly weaker guarantee than \EQ{1}, and there exist instances where Suksompong's algorithm fails to compute an \EQ{1} allocation (even though such an allocation exists).\footnote{Consider the instance in \Cref{footnote:Cake_rounding_fails_EQ1} where $u_{\max}=12$. Starting with the allocation $A \coloneqq (\{\emptyset\}, \{v_1,\dots,v_6\}, \{v_7\})$, Suksompong's local search algorithm immediately returns $A$ as the output since it is $u_{\max}$-\EQ{}, even though it violates \EQ{1}. Observe that the allocation $B \coloneqq (\{v_1,v_2,v_3\},\{v_4,v_5,v_6\}, \{v_7\})$ is \EQ{1} and has a higher egalitarian welfare.} Thus, this approach, too, does not resolve the existence of \EQ{1} and complete allocations. 
Moreover, this algorithm could fail to satisfy standard criteria of \emph{economic efficiency}. Given this limitation, \citet{S19fairly} posed the computation of `approximate equitable allocations with non-trivial welfare guarantees' as an open problem. 

We address this gap by providing a polynomial-time algorithm for computing a connected, complete, and \EQ{1} allocation (\Cref{thm:EQ1+complete-polytime}). Our algorithm also provides the following economic efficiency guarantee: For any given agent ordering $\sigma$, our algorithm returns a connected, $\sigma$-consistent, and \EQ{1} allocation whose egalitarian welfare is the highest among \emph{all} connected and $\sigma$-consistent allocations. In other words, a connected and egalitarian-optimal allocation for any fixed ordering of the agents is, without loss of generality, \emph{fair} (i.e., \EQ{1}) and efficiently computable.

\begin{restatable}[]{theorem}{EQonecomp}
\label{thm:EQ1+complete-polytime}
There is a polynomial-time algorithm for computing a connected, complete, and \EQ{1} allocation of a path consistent with a given ordering of agents. Furthermore, this allocation is egalitarian-optimal among all connected allocations consistent with the given ordering.
\end{restatable}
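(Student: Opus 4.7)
My plan is to first compute the maximum egalitarian welfare $T^*$ achievable by any $\sigma$-consistent allocation via a standard max--min dynamic program, and then produce an \EQ{1} allocation of egalitarian welfare $T^*$ via a second DP that searches over ``good'' bundles. Here, a bundle $B$ is \emph{good for $a_k$} (with respect to $T^*$) if $u_k(B) \ge T^*$ and $\min_{v \in B} u_k(B \setminus \{v\}) \le T^*$. For $T^*$ I use the recurrence $f(i, j) = \max_{j' \le j} \min(f(i-1, j'), u_i(\{v_{j'+1}, \ldots, v_j\}))$ with $f(0, 0) = +\infty$, giving $T^* = f(n, m)$ in $O(nm^2)$ time with value oracle access.

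The crucial observation is that, for monotone valuations, a $\sigma$-consistent egalitarian-optimal allocation is \EQ{1} if and only if every bundle is good. Indeed, the \EQ{1} requirement ``$\exists v \in A_k: u_i(A_i) \ge u_k(A_k \setminus \{v\})$'' for every pair $(a_i, a_k)$ reduces to $\min_v u_k(A_k \setminus \{v\}) \le u_i(A_i)$; quantifying over all $i$ and taking $a_i$ to be any minimum-value agent yields exactly $\min_v u_k(A_k \setminus \{v\}) \le T^*$, while the case $u_k(A_k) = T^*$ is automatic by monotonicity. So it suffices to decide whether $V$ admits a $\sigma$-consistent partition into $n$ good bundles. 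This is another straightforward DP: $g(i, j) = \bigvee_{j'}[\,g(i-1, j') \land \{v_{j'+1}, \ldots, v_j\}\text{ is good for }a_i\,]$, computable in $O(nm^2)$ time, with an explicit allocation recovered by standard backtracking. Any allocation returned in this way has every bundle of value at least $T^*$, hence egalitarian welfare exactly $T^*$, establishing the egalitarian-optimality claim.

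What remains is to show that $g(n, m) = 1$ always holds---equivalently, that such a good partition always exists. The plan is an exchange argument. Start with any egalitarian-optimal $\sigma$-consistent allocation $A$; while some $a_k$ is \emph{overfull} (i.e., $\min_v u_k(A_k \setminus \{v\}) > T^*$), remove an endpoint of $A_k$ and transfer it to an adjacent bundle. Overfullness ensures $a_k$'s value remains $> T^*$, and monotonicity ensures the neighbor's value does not decrease, so the egalitarian welfare stays at $T^*$. The main obstacle is termination: a naive swap could make the receiving neighbor overfull and trigger a ripple. To handle this, I would use a carefully layered potential---e.g., the lexicographic pair (number of overfull agents, total slack $\sum_k \max(0, u_k(A_k) - T^*)$), coupled with a rule that routes each transfer toward whichever neighbor currently has the most slack below the overfullness threshold. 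The bulk of the argument would be to verify that each well-chosen swap strictly decreases this potential, so that the process converges to a fully good (hence \EQ{1}) egalitarian-optimal allocation, which is in turn output by the DP.
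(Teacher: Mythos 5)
Your reformulation is sound: for a $\sigma$-consistent allocation with egalitarian welfare $T^*$ (the optimum), \EQ{1} is exactly the condition that every bundle $B=A_k$ satisfies $u_k(B)\ge T^*$ and $\min_{v}u_k(B\setminus\{v\})\le T^*$, and both dynamic programs are routine. (One small caveat: for non-additive monotone valuations the oracle is only defined on \emph{connected} bundles, so you should take the minimum over the two endpoint goods of $B$ only --- which still suffices for \EQ{1} and is what the paper implicitly does.) So steps (a)--(c) would give a correct $\O(nm^2)$ algorithm \emph{provided a good partition always exists}. That existence claim is precisely where the proposal has a genuine gap, and you have left it at its only hard point. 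Your proposed potential does not work as stated: when an endpoint good $v$ moves from an overfull $A_k$ to a neighbour $A_j$, monotonicity gives only $u_k(A_k\setminus\{v\})\le u_k(A_k)$ and $u_j(A_j\cup\{v\})\ge u_j(A_j)$, and the receiver's gain can strictly exceed the giver's loss, so the total slack $\sum_k\max(0,u_k(A_k)-T^*)$ can \emph{increase}; likewise the number of overfull agents need not drop, and a good can oscillate (e.g.\ $a_n$ is overfull, pushes $v$ to $a_{n-1}$, which thereby becomes overfull and, being forced to shed an endpoint, pushes $v$ straight back). ``Route toward the neighbour with the most slack below the threshold'' is not shown to break such cycles, and you explicitly defer ``the bulk of the argument.'' Without termination, nothing guarantees $g(n,m)=1$, and the algorithm may simply report failure.

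The paper avoids local search entirely and proves existence constructively, which is the idea your proposal is missing: designate a single \emph{sink} agent. Concretely, with $\theta=u^k=T^*$ and $u^{k+1}$ the next realizable utility value, it finds the leftmost ``$\theta$-unsafe'' agent $a_i$ (the first agent that cannot be pushed up to $u^{k+1}$ while leaving $\theta$ for all its successors), gives $a_1,\dots,a_{i-1}$ \emph{minimal} connected bundles worth at least $u^{k+1}$ scanning left-to-right, gives $a_n,\dots,a_{i+1}$ minimal bundles worth at least $\theta$ scanning right-to-left, and hands the leftover middle segment to $a_i$. Minimality makes every bundle except $a_i$'s good in your sense (dropping a boundary good takes its value to at most $\theta$), and the unsafety/safety of $a_i$/$a_{i-1}$ pins $a_i$'s utility to exactly $\theta$, so that bundle is good trivially. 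You could either import this construction as your existence lemma (after which your DP is not even needed), or repair your exchange argument by always routing transfers toward one fixed sink agent; as it stands, the existence of a good partition --- the heart of the theorem --- is unproven.
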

Note that the strong existence guarantee of \Cref{thm:EQ1+complete-polytime} cannot be extended to \EQ{1} and Pareto optimal allocations. 
 Indeed, consider an instance with five goods and three agents where $u_1 = (1,0,0,0,0)$, $u_2 = (0,1,0,0,0)$, and $u_3 = (0,0,1,1,1)$. For $\sigma=(1,2,3)$, the unique connected, $\sigma$-consistent, and Pareto optimal allocation is $(\{v_1\},\{v_2\},\{v_3,v_4,v_5\})$ which violates \EQ{1}.

\begin{algorithm*}
 \DontPrintSemicolon
 \KwIn{An instance $\I = \langle G, \N, \U \rangle$ and an ordering of agents $\sigma$.}
 \KwOut{A connected allocation $A$.}
 \BlankLine
 \Comment{Phase $1$: Compute the optimal egalitarian welfare $\theta$}
 \BlankLine
 \tikzmk{A}
 $k \leftarrow 1$\;
 $\theta \leftarrow u^k$
 \Comment*[r]{Initialize the guess for optimal egalitarian welfare}
 \While{$k \leq n (m+1)^2$}{
 Starting from the leftmost available good, move rightward along $G$ and tentatively assign a minimal connected bundle worth at least $u^{k+1}$ to each successive agent in $\sigma$.
 \label{algline:EQ1+Comp_Phase1}\;
 \uIf{the assignment in Line~\ref{algline:EQ1+Comp_Phase1} is infeasible\label{algline:EQ1+Comp_Phase1_FeasibilityCheck}}{Exit while-loop and start Phase 2.}
 \Else{
 $\theta \leftarrow u^{k+1}$\Comment*[r]{Update the guess}
 $k \leftarrow k+1$\Comment*[r]{Update $k$}
 }
 }
 \nonl \tikzmk{B}
 \boxit{mygray}
 \Comment{Phase $2$: Find a $\theta$-unsafe agent via a left-to-right scan}
 \BlankLine
 \nl
 \tikzmk{A}
 $A \leftarrow (\emptyset,\dots,\emptyset)$\;
 $i \leftarrow 1$\;
 \While{$i \leq n$}{
 \uIf{there exists a $\sigma$-consistent partial allocation that is identical to $A$ for the agents $a_1,\dots,a_{i-1}$, and, in addition, assigns connected bundles worth at least $u^{k+1}$ to $a_i$ and worth at least $\theta=u^k$ to each of $a_{i+1},\dots,a_n$ \label{algline:EQ1+Comp_Phase2_condition}}
 {$A_i \leftarrow $ the minimal connected bundle worth at least $u^{k+1}$ to agent $a_i$ starting from the leftmost available good.\;
 $G \leftarrow G \setminus A_i$\Comment*[r]{Update the set of remaining goods}
 $i \leftarrow i+1$}
 \Else{Exit while-loop and start Phase 3.\Comment*[r]{$a_i$ is the leftmost $\theta$-unsafe agent}}
 }
 \nonl \tikzmk{B}
 \boxit{mygray}
 \Comment{Phase $3$: Finalize the remaining assignments via a right-to-left scan}
 \BlankLine
 \nl
 \tikzmk{A}
 $k \leftarrow n$\;
 \While{$k > i$}{
 $A_k \leftarrow $ the minimal connected bundle worth at least $\theta=u^k$ to agent $a_k$ starting from the rightmost available good.\label{algline:Minimal_RightToLeft_Assignment}\;
 $G \leftarrow G \setminus A_k$\Comment*[r]{Update the set of remaining goods}
 $k \leftarrow k-1$}
 $A_i \leftarrow G$\Comment*[r]{Assign all remaining goods to the $\theta$-unsafe agent $a_i$}\label{algline:Assign_leftovers_to_unsafe_agent}
 \KwRet{A}\;
 \nonl \tikzmk{B}
 \boxit{mygray}
\caption{Algorithm for finding a connected and complete \EQ{1} allocation.}
 \label{alg:EQ1+Complete}
\end{algorithm*}

\textbf{Description of the algorithm:} Let $\sigma \coloneqq (a_1,a_2,\dots,a_n)$. Our algorithm (see Algorithm~\ref{alg:EQ1+Complete}) consists of three phases. 

In Phase 1, the algorithm computes the \emph{optimal egalitarian welfare} $\theta$ for $\sigma$-consistent allocations. To compute this value, the algorithm starts with a preprocessed list $L = (u^1,u^2,\dots)$ containing all distinct realizable utility values of any agent for any connected bundle, where $u^1 \coloneqq 0 < u^2 < u^3$ and so on. (The list $L$ is of length $\O(nm^2)$ since the number of distinct connected bundles in a path is $\O(m^2)$.) In round $k$, the algorithm checks whether there exists a connected and $\sigma$-consistent partial allocation with egalitarian welfare $u^{k+1}$. To do this, the algorithm starts from the leftmost available good and iteratively assigns \emph{minimal} connected bundles to the agents $a_1,a_2,\dots$ such that each agent gets a utility of at least $u^{k+1}$; here, \emph{minimal} refers to cardinality-wise smallest bundle. If a feasible partial allocation exists, the algorithm updates its `guess' of the achievable egalitarian welfare to $\theta = u^{k+1}$ and moves to round $k+1$. Otherwise, it fixes $\theta=u^k$ and moves to Phase 2. Thus, for the instance in \Cref{fig:theta_unsafe}, the partial allocation in round $1$ ($\theta=0$) is $(\{v_1\},\{v_2,v_3\},\{v_4\})$, and that in round $2$ ($\theta=1$) is $(\{v_1,v_2\},\{v_3,v_4\},\{v_5,v_6,v_7,v_8\})$. In round $3$, the algorithm encounters infeasibility, so it fixes $\theta=2$.
\begin{figure}
\centering
\begin{tikzpicture}
    \tikzmath{\x=0;\edgex=0;\edgewidth=0.25;\nodesep=0.65;\edgenodesep=0.2;\Yoffset=-0.5;\bigradius=0.2;\smallradius=0.05;}
    \draw (\x,0) circle [radius=\bigradius] node [] {\small{$v_1$}};
    \draw (\x,\Yoffset) node [] {\small{$1$}};
    \draw (\x,2*\Yoffset) node [] {\small{$0$}};
    \draw (\x,3*\Yoffset) node [] {\small{$0$}};
    \draw (\x-0.5,\Yoffset) node [] {\small{$a_1:$}};
    \draw (\x-0.5,2*\Yoffset) node [] {\small{$a_2:$}};
    \draw (\x-0.5,3*\Yoffset) node [] {\small{$a_3:$}};
    \draw[fill=black] (\x,5*\Yoffset) circle [radius=\smallradius] node [] {};
    \draw[fill=black] (\x,6.5*\Yoffset) circle [radius=\smallradius] node [] {};
    \tikzmath{\edgex=\x+\edgenodesep;}
    \tikzmath{\x=\x+\nodesep;}
    \draw (\edgex,0) -- (\edgex+\edgewidth,0);
    \draw (\x,0) circle [radius=\bigradius] node [] {\small{$v_2$}};
    \draw (\x,\Yoffset) node [] {\small{$1$}};
    \draw (\x,2*\Yoffset) node [] {\small{$0$}};
    \draw (\x,3*\Yoffset) node [] {\small{$0$}};
    \draw[fill=black] (\x,5*\Yoffset) circle [radius=\smallradius] node [] {};
    \draw[fill=black] (\x,6.5*\Yoffset) circle [radius=\smallradius] node [] {};
    \tikzmath{\edgex=\x+\edgenodesep;}
    \tikzmath{\x=\x+\nodesep;}
    \draw (\edgex,0) -- (\edgex+\edgewidth,0);
    \draw (\x,0) circle [radius=\bigradius] node [] {\small{$v_3$}};
    \draw (\x,\Yoffset) node [] {\small{$1$}};
    \draw (\x,2*\Yoffset) node [] {\small{$1$}};
    \draw (\x,3*\Yoffset) node [] {\small{$0$}};
    \draw[fill=black] (\x,5*\Yoffset) circle [radius=\smallradius] node [] {};
    \draw[fill=black] (\x,6.5*\Yoffset) circle [radius=\smallradius] node [] {};
    \tikzmath{\edgex=\x+\edgenodesep;}
    \tikzmath{\x=\x+\nodesep;}
    \draw (\edgex,0) -- (\edgex+\edgewidth,0);
    \draw (\x,0) circle [radius=\bigradius] node [] {\small{$v_4$}};
    \draw (\x,\Yoffset) node [] {\small{$0$}};
    \draw (\x,2*\Yoffset) node [] {\small{$1$}};
    \draw (\x,3*\Yoffset) node [] {\small{$1$}};
    \draw[fill=black] (\x,5*\Yoffset) circle [radius=\smallradius] node [] {};
    \draw[fill=black] (\x,6.5*\Yoffset) circle [radius=\smallradius] node [] {};
    \tikzmath{\edgex=\x+\edgenodesep;}
    \tikzmath{\x=\x+\nodesep;}
    \draw (\edgex,0) -- (\edgex+\edgewidth,0);
    \draw (\x,0) circle [radius=\bigradius] node [] {\small{$v_5$}};
    \draw (\x,\Yoffset) node [] {\small{$0$}};
    \draw (\x,2*\Yoffset) node [] {\small{$1$}};
    \draw (\x,3*\Yoffset) node [] {\small{$0$}};
    \draw[fill=black] (\x,5*\Yoffset) circle [radius=\smallradius] node [] {};
    \draw[fill=black] (\x,6.5*\Yoffset) circle [radius=\smallradius] node [] {};
    \tikzmath{\edgex=\x+\edgenodesep;}
    \tikzmath{\x=\x+\nodesep;}
    \draw (\edgex,0) -- (\edgex+\edgewidth,0);
        \draw (\x,0) circle [radius=\bigradius] node [] {\small{$v_6$}};
    \draw (\x,\Yoffset) node [] {\small{$0$}};
    \draw (\x,2*\Yoffset) node [] {\small{$1$}};
    \draw (\x,3*\Yoffset) node [] {\small{$1$}};
    \draw[fill=black] (\x,5*\Yoffset) circle [radius=\smallradius] node [] {};
    \draw[fill=black] (\x,6.5*\Yoffset) circle [radius=\smallradius] node [] {};
    \tikzmath{\edgex=\x+\edgenodesep;}
    \tikzmath{\x=\x+\nodesep;}
    \draw (\edgex,0) -- (\edgex+\edgewidth,0);
    \draw (\x,0) circle [radius=\bigradius] node [] {\small{$v_7$}};
    \draw (\x,\Yoffset) node [] {\small{$0$}};
    \draw (\x,2*\Yoffset) node [] {\small{$1$}};
    \draw (\x,3*\Yoffset) node [] {\small{$0$}};
    \draw[fill=black] (\x,5*\Yoffset) circle [radius=\smallradius] node [] {};
    \draw[fill=black] (\x,6.5*\Yoffset) circle [radius=\smallradius] node [] {};
    \tikzmath{\edgex=\x+\edgenodesep;}
    \tikzmath{\x=\x+\nodesep;}
    \draw (\edgex,0) -- (\edgex+\edgewidth,0);
    \draw (\x,0) circle [radius=\bigradius] node [] {\small{$v_8$}};
    \draw (\x,\Yoffset) node [] {\small{$0$}};
    \draw (\x,2*\Yoffset) node [] {\small{$0$}};
    \draw (\x,3*\Yoffset) node [] {\small{$1$}};
    \draw[fill=black] (\x,5*\Yoffset) circle [radius=\smallradius] node [] {};
    \draw[fill=black] (\x,6.5*\Yoffset) circle [radius=\smallradius] node [] {};
\draw (0,5*\Yoffset) coordinate (O1) -- (\x,5*\Yoffset)coordinate(ff);
\draw (-0.1,5*\Yoffset+0.1) rectangle (0.1+4*\bigradius+2*\edgewidth,5*\Yoffset-0.1) node[midway,above=0.08] {\footnotesize{$u_1=3$}};
\draw (-0.1+6*\bigradius+3*\edgewidth,5*\Yoffset+0.1) rectangle (0.1+8*\bigradius+4*\edgewidth,5*\Yoffset-0.1) node[midway,above=0.08] {\footnotesize{$u_2=2$}};
\draw (-0.1+10*\bigradius+5*\edgewidth,5*\Yoffset+0.1) rectangle (0.1+14*\bigradius+7*\edgewidth,5*\Yoffset-0.1) node[midway,above=0.08] {\footnotesize{$u_3=2$}};
\node[] at (\x+1.3,5*\Yoffset) (label1) {\footnotesize{$a_1$ is $2$-safe}};
\node[] at (\x+1.3,3.5*\Yoffset) (label2) {\small{$\theta = u^k = 2$}};
\node[] at (\x+1.3,2.5*\Yoffset) (label2) {\small{$u^{k+1}=3$}};
\draw (0,6.5*\Yoffset) coordinate (O1) -- (\x,6.5*\Yoffset)coordinate(ff);
\draw (-0.1,6.5*\Yoffset+0.1) rectangle (0.1+4*\bigradius+2*\edgewidth,6.5*\Yoffset-0.1) node[midway,above=0.08] {\footnotesize{$u_1=3$}};
\draw (-0.1+6*\bigradius+3*\edgewidth,6.5*\Yoffset+0.1) rectangle (0.1+10*\bigradius+5*\edgewidth,6.5*\Yoffset-0.1) node[midway,above=0.08] {\footnotesize{$u_2=3$}};
\draw[densely dashed] (-0.1+12*\bigradius+6*\edgewidth,6.5*\Yoffset+0.1) rectangle (0.1+14*\bigradius+7*\edgewidth,6.5*\Yoffset-0.1) node[midway,above=0.08] {\footnotesize{$u_3<2$}};
\node[] at (\x+1.3,6.5*\Yoffset) (label1) {\footnotesize{$a_2$ is $2$-\emph{unsafe}}};
\end{tikzpicture}
\caption{Illustrating the notion of $\theta$-unsafe agent on an instance with binary valuations. For $\theta=u^k=2$ and $u^{k+1}=3$, agent $a_1$ is $\theta$-safe because there exists a partial allocation in which $a_1$'s utility is at least $u^{k+1}$ and that of each of its successors is at least $u^k$. Agent $a_2$ is $\theta$-unsafe because giving a utility of at least $u^{k+1}$ to both $a_1$ and $a_2$ necessarily involves $a_3$'s utility being less than $u^k$.}
\label{fig:theta_unsafe}
\end{figure}

In Phase 2, the algorithm searches for a $\theta$-\emph{unsafe agent}. Given any $\theta = u^k$, we say that agent $a_i$ is $\theta$-\emph{safe} if there exists a connected and $\sigma$-consistent (partial) allocation in which each of $a_1,a_2,\dots,a_i$ gets a utility of at least $u^{k+1}$, and each of $a_{i+1},\dots,a_n$ gets a utility of at least $u^k$. A \emph{$\theta$-unsafe} agent is one that is not $\theta$-safe (see \Cref{fig:theta_unsafe}). Note that a $\theta$-unsafe agent \emph{must exist} since we know from Phase 1 that an egalitarian welfare of $u^{k+1}$ is not possible. The procedure in Phase 1 can be easily adapted to compute the \emph{leftmost} $\theta$-unsafe agent, say $a_i$. Having found $a_i$, the algorithm now fixes the assignments of its predecessors $a_1,\dots,a_{i-1}$ (but not $a_i$) by starting from the leftmost available good and iteratively assigning each agent a minimal connected bundle worth at least $u^{k+1}$. The algorithm now moves to Phase 3.

In Phase 3, the algorithm finalizes the assignments of the remaining agents via a \emph{right-to-left} scan of the path $G$. Specifically, starting from the rightmost available good, the algorithm moves leftwards along $G$ and iteratively assigns minimal connected bundles worth at least $u^k$ to the agents in the reverse order $a_n,a_{n-1},$ and so on. 
Upon encountering $a_i$ for the second time, the algorithm assigns to it all the remaining goods, and returns the final allocation as the output. The proof of \Cref{thm:EQ1+complete-polytime} follows.

\begin{proof} (of \Cref{thm:EQ1+complete-polytime})
We will show that the above algorithm (Algorithm~\ref{alg:EQ1+Complete}) satisfies the desired properties. 

Let us start with the running time analysis of the algorithm assuming that agents have additive valuations (later in \Cref{rem:EQ1_Complete_Beyond_Additive}, we will provide a similar analysis for general monotone valuations). Since there are $\O(m^2)$ possible connected bundles for each of the $n$ agents, the computation of $L$ requires computing the utility of $\O(nm^2)$ bundles by adding up the individual utilities of the constituent goods. This amounts to a total running time of $\O(nm^2\log(u_{\max}))$ for the preprocessing phase. Further, we assume that the agents' utilities for all possible bundles are cached so as to facilitate constant time access in the remainder of the algorithm.

The total running time for Phase 1 is $\O(nm^3)$, since there are at most $n(m+1)^2$ iterations of the while-loop and each iteration involves scanning at most $m$ goods. 
Phase 2 involves at most $n$ iterations of the while-loop, and each iteration requires assigning connected bundles to all agents from left to right, which takes $\O(m)$ time. Thus, the total running time for Phase 2 is $\O(nm)$. 
In Phase 3, each good is considered at most once during the right-to-left scan, resulting in a running time of $\O(m)$. Thus, overall, the algorithm takes $\O(nm^3 + nm^2\log(u_{\max}))$ time.

The allocation $A$ returned by the algorithm is \emph{complete} because all leftover goods are allocated in the last step, and is \emph{$\sigma$-consistent} because this property is maintained by the algorithm at every step. Furthermore, $A$ is also \emph{connected} since the algorithm assigns connected bundles to $a_1,\dots,a_{i-1}$ from left to right and to $a_n,\dots,a_{i+1}$ from right to left. (The feasibility of the right-to-left assignment is guaranteed by the fact that $a_{i-1}$ is $\theta$-safe, as $a_i$ is \emph{leftmost} $\theta$-unsafe agent.) Since $G$ is a path, the set of leftover goods assigned to $a_i$ in Phase 3 is also connected.

We will now argue that $A$ is \emph{egalitarian-optimal} among all connected and $\sigma$-consistent allocations. First, observe that the value $\theta = u^k$ fixed at the end of Phase 1 is indeed the \emph{optimal} egalitarian welfare of any connected and $\sigma$-consistent allocation. (Otherwise, by monotonicity of valuations, there must exist a connected and $\sigma$-consistent allocation with egalitarian welfare $u^{k+1}$ or higher in which agents receive \emph{minimal} bundles. This, however, would contradict the infeasibility encountered for $\theta=u^{k+1}$.) Next, we will show that the egalitarian welfare of $A$ is equal to $u^k$, which will establish egalitarian-optimality. Indeed, each of $a_1,\dots,a_{i-1}$ gets a utility of at least $u^{k+1} > u^k$ in Phase 2, and each of $a_n,\dots,a_{i+1}$ gets a utility of at least $u^k$ in Phase 3. The utility of $a_i$ for its assigned bundle is \emph{exactly} $u^k$ because of the following two reasons: First, $a_i$'s utility is at least $\theta = u_k$ since $a_{i-1}$ is $\theta$-safe (recall that $a_i$ is the \emph{leftmost} $\theta$-unsafe agent). Second, since $a_i$ is $\theta$-unsafe, assigning a bundle worth at least $u^{k+1}$ to $a_i$ (and each of its predecessors $a_1,\dots,a_{i-1}$) would imply that one of its successors $a_{i+1},\dots,a_n$ gets utility strictly below $u^k$, which contradicts the assignments in Phase 3. Thus, $a_i$'s utility must be strictly below $u^{k+1}$, and hence, equal to $u^k = \theta$.

Finally, to prove that $A$ is \EQ{1}, notice that if the utility of an agent is strictly greater than $u^k$ (in particular, each of $a_1,\dots,a_{i-1}$ gets a utility at least $u^{k+1} > u^k$), then by \emph{minimality} of bundles, there must exist a boundary good whose removal results in the agent's residual utility being strictly below $u^{k+1}$, and therefore less than or equal to $u^k$. Since each agent gets a utility at least $u^k$, $A$ must be \EQ{1}.
\end{proof}

We observe that the running time of Algorithm~\ref{alg:EQ1+Complete} can be improved to $\OO(nm^{2})$ via following modifications: In the preprocessing step, as before, we go through all possible $\OO(nm^{2})$ bundles. We cache these bundles, and compute $u_{\max}$ which amounts to the running time of $O(nm^{2})$ for this step. Next, in Phase 1, we use binary instead of the linear search to find the \emph{optimal egalitarian welfare} $\theta$. With these modifications, Phase 1 runs in time $\OO(m \log mn)$. In Phase 2, the $\theta$-unsafe agent can be found in $\OO(m)$ time with a combination of left-to-right scan that tentatively assigns bundles worth $u^{k+1}$ and a right-to-left scan that assigns bundles worth $u^{k}$. Finally, Phase 3 runs takes $\OO(m)$ time as before.  

\begin{remark}
Note that the algorithm in \Cref{thm:EQ1+complete-polytime} and the analysis of its correctness only use the monotonicity of valuations, and therefore the result extends to \emph{non-additive} utilities. The running time analysis in this case relies on the existence of a valuation oracle that, given a connected bundle, returns the agent's utility for that bundle. Since the number of distinct connected bundles in a path is $\O(m^2)$, after $\O(nm^2)$ valuation queries, each agent's value for every connected bundle is available to the algorithm. The rest of the analysis is identical to that in \Cref{thm:EQ1+complete-polytime}.
\label{rem:EQ1_Complete_Beyond_Additive}
\end{remark}

\begin{remark}
Another relevant implication is that our algorithm can be easily adapted for negative valuations to obtain the efficient computation of connected \EQ{1} allocations for \emph{chores}. The latter result provides a tractable alternative to a recent result showing \NPHness{} for connected and exactly equitable chore allocations~\citep{BCL19chore}.
\end{remark}

\begin{remark}
\citet{BCF+19almost} and \citet{OPS19fairly} have independently shown that when agents have identical monotone valuations, a connected \EF{1} allocation of a path can be efficiently computed. Since \EF{1} and \EQ{1} coincide for identical valuations, our result in \Cref{thm:EQ1+complete-polytime} implies this result as a corollary.

Additionally, we note that although the algorithm of \citet{BCF+19almost} and its analysis are presented for identical valuations, a natural extension of their algorithm for general valuations can be used to derive an alternative proof of \Cref{thm:EQ1+complete-polytime}.
\label{rem:EF1_identical}
\end{remark}

\subsection{\EQ{1} and \PO{}* Allocations}
\label{subsec:EQ1_PO*_Algorithm}

The existence result in \Cref{thm:EQ1+complete-polytime} is quite general, since it applies to any fixed ordering of agents and any monotone valuations instance, and reconciles fairness (i.e., \EQ{1}) with a weak form of economic efficiency (i.e., completeness). On closer inspection, though, we find that it implies an even stronger existence result. Specifically, given an agent ordering $\sigma$, let $\mathcal{A}^\sigma$ denote the set of all connected, $\sigma$-consistent, \EQ{1} and complete allocations for the given instance. From \Cref{thm:EQ1+complete-polytime}, we know that $\mathcal{A}^\sigma$ is non-empty. Furthermore, since there are only finitely many allocations, there must exist an allocation in $\mathcal{A}^\sigma$ that is not Pareto dominated by any other allocation in $\mathcal{A}^\sigma$. We call this property \PO{}*.

Formally, given an agent ordering $\sigma$, we say that allocation $A$ is \PO{}* if it is connected, $\sigma$-consistent, complete, and \EQ{1}, and no other connected, $\sigma$-consistent, complete, and \EQ{1} allocation Pareto dominates $A$. From the aforementioned argument, it follows that a \PO{}* allocation always exists.

Intriguingly, while Algorithm~\ref{alg:EQ1+Complete} can be used to establish the existence of a \PO{}* allocation even for general monotone valuations, it can fail to return such an allocation even for binary additive valuations. Indeed, consider an instance with five goods $v_1, \dots v_5$ and three agents with valuations $u_1 = (1,0,0,1,0)$, $u_2 = (0,1,1,0,0)$, and $u_3 = (0,0,0,1,1)$. Given the ordering $\sigma=(1,2,3)$, Algorithm~\ref{alg:EQ1+Complete} computes a $\sigma$-consistent and \EQ{1} allocation $(\{v_1, v_2\}, \{v_3, v_4\}, \{v_5\})$ with utility profile $(1, 1, 1)$, which is Pareto dominated by another $\sigma$-consistent and \EQ{1} allocation $(\{v_1\}, \{v_2, v_3\}, \{v_4, v_5\})$ with utility profile $(1, 2, 2)$.

Thus, for a given agent ordering $\sigma$, \PO{}* is stronger than \EQ{1}+completeness as in this case, the former implies the latter. We note that \PO{}* with respect to an ordering $\sigma$ could be weaker (i.e., Pareto dominated) than an \EQ{1}+complete allocation with respect to a different ordering $\sigma'$. 


This motivates the following natural question: Given an ordering $\sigma$, can a \PO{}* allocation be efficiently computed? While we are unable to settle this question for general monotone valuations, in \Cref{thm:EQ1+POstar-polytime} we show that a variant of Algorithm~\ref{alg:EQ1+Complete} efficiently computes a \PO{}* allocation for binary additive valuations.

\begin{restatable}[]{theorem}{EQonePOstarcomp}
\label{thm:EQ1+POstar-polytime}
Given an instance with binary additive valuations and any agent ordering $\sigma$, a connected, $\sigma$-consistent, \EQ{1}, and \PO{}* allocation of a path can be computed in polynomial time. 
\end{restatable}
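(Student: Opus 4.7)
The plan is to compute a connected, $\sigma$-consistent, complete, and \EQ{1} allocation of \emph{maximum utilitarian welfare}; such an allocation is automatically \PO{}*, because any connected, $\sigma$-consistent, complete, \EQ{1} allocation that Pareto-dominated it would have strictly greater total welfare, contradicting maximality. So the whole problem reduces to: maximize $\sum_i u_i(A_i)$ over the class of connected, $\sigma$-consistent, complete, \EQ{1} allocations.

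The first key step is a structural observation specific to the binary additive setting: the definition of \EQ{1} forces $\max_i u_i(A_i) - \min_i u_i(A_i) \le 1$, so if the minimum utility across agents in an \EQ{1} allocation is $w$, then every individual utility lies in $\{w, w+1\}$. Let $\theta^{\star}$ be the optimal egalitarian welfare of connected, $\sigma$-consistent allocations, computable via \Cref{alg:EQ1+Complete}. Since no $\sigma$-consistent allocation has minimum utility exceeding $\theta^{\star}$, and since for $w < \theta^{\star}$ an \EQ{1} allocation with $\min = w$ has utilitarian welfare at most $n(w+1) \le n\theta^{\star}$, whereas at $w = \theta^{\star}$ it is at least $n\theta^{\star}$, the maximum utilitarian welfare within the \EQ{1} class is attained at $w = \theta^{\star}$. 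It therefore suffices to restrict to connected, $\sigma$-consistent, complete allocations all of whose agent utilities lie in $\{\theta^{\star}, \theta^{\star}+1\}$, and to \emph{maximize the number of agents whose utility equals $\theta^{\star}+1$}.

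I would solve this last optimization by a left-to-right dynamic program. Let $f(j, i)$ denote the maximum number of agents in $\{a_1, \dots, a_i\}$ attaining utility exactly $\theta^{\star}+1$, taken over all $\sigma$-consistent partial allocations of $\{v_1, \dots, v_j\}$ in which each of those $i$ agents receives a connected bundle of utility in $\{\theta^{\star}, \theta^{\star}+1\}$. The recursion considers, for each cut-point $j' < j$ with $u_i(\{v_{j'+1},\dots,v_j\}) \in \{\theta^{\star}, \theta^{\star}+1\}$, extending $f(j', i-1)$ by assigning $\{v_{j'+1},\dots,v_j\}$ to agent $a_i$, and adds one to the count iff $u_i(\{v_{j'+1},\dots,v_j\}) = \theta^{\star}+1$. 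Feasibility of $f(m, n)$ is guaranteed because \Cref{alg:EQ1+Complete} already produces a witness allocation with min utility $\theta^{\star}$; standard backtracking reconstructs an optimizer, and after an $O(nm^2)$ precomputation of bundle utilities the DP runs in polynomial time.

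The main obstacle I anticipate is justifying the coupling between \EQ{1} and the narrower ``all utilities in $\{\theta^{\star}, \theta^{\star}+1\}$'' shape: in particular, arguing cleanly that maximizing utilitarian welfare inside this narrower class is equivalent to doing so inside the full connected, $\sigma$-consistent, complete, \EQ{1} class, and therefore really does yield \PO{}*. Once that coupling is in place, the DP correctness, the reconstruction procedure, and the running-time bookkeeping are routine.
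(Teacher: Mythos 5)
Your argument is correct, but it takes a genuinely different route from the paper's. The paper extends Algorithm~\ref{alg:EQ1+Complete} with a greedy left-to-right pass that gives each agent a minimal bundle worth $\theta+1$ whenever a feasibility check certifies that all subsequent agents can still receive $\theta$, and a minimal bundle worth $\theta$ otherwise; \PO{}* is then proved by a prefix-exchange argument showing that any Pareto-dominating \EQ{1} allocation would have to hand some agent $\theta+1$ exactly where the feasibility check failed, leaving too few goods for its successors. You instead observe that under binary additive valuations \EQ{1} is equivalent to all utilities lying in $\{w,w+1\}$, that any utilitarian-welfare maximizer in the \EQ{1} class must have $w=\theta^{\star}$, and that such a maximizer is automatically \PO{}*; this reduces everything to maximizing the number of agents at level $\theta^{\star}+1$ via an interval-partition DP. The ``coupling'' you flag as the main obstacle is in fact unproblematic: an \EQ{1} allocation with minimum $w<\theta^{\star}$ has welfare at most $n(w+1)-1<n\theta^{\star}$, while the subclass at $w=\theta^{\star}$ is nonempty by \Cref{thm:EQ1+complete-polytime} (whose output, for binary valuations, has every utility equal to $\theta^{\star}$ or $\theta^{\star}+1$), so the two maximizations coincide. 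Your route is more modular and in fact delivers a stronger guarantee---utilitarian optimality among connected, $\sigma$-consistent, complete, \EQ{1} allocations, rather than mere undominatedness---at the cost of a somewhat larger $\O(nm^{2})$ running time versus the paper's $\O(m^{2}+nm)$. The one detail to patch is the degenerate case $\theta^{\star}=0$ (e.g., $n>m$), where empty bundles must be permitted, so the DP should also allow the cut $j'=j$ when $0\in\{\theta^{\star},\theta^{\star}+1\}$. Both proofs lean essentially on binarity (yours through the two-level characterization, the paper's through realizability of $\theta+1$ and exactness of minimal bundles), consistent with the paper's closing counterexample showing the method does not extend to general valuations.
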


The proof of \Cref{thm:EQ1+POstar-polytime} is presented in the Appendix \ref{sec:Proof_EQ1+POstar-polytime}.

\section{Structured Preferences}

In this section, we will explore a different avenue for circumventing the intractability associated with non-wasteful \EQ{1} allocations. 
Unlike in \Cref{thm:EQ1+complete-polytime} 
 where we relaxed the efficiency requirement, this time we will instead assume that agents have \emph{structured} preferences. In particular, we will focus on 
\emph{binary extremal valuations} wherein for each agent $a_i$, either there exists $\ell_i \in [m]$ such that $u_{i,j} = 1$ for all $j \in \{1,\dots,\ell_i\}$ and $0$ otherwise (i.e., $a_i$ is \emph{left-extremal}), or there exists $r_i \in [m]$ such that $u_{i,j} = 1$ for all $j \in \{r_i,\dots,m\}$ and $0$ otherwise (i.e., $a_i$ is \emph{right-extremal}). Similar domain restrictions have been previously considered in the context of voting problems~\citep{EL15structure}.

\begin{restatable}[]{theorem}{EQoneNWExtremal}
 \label{thm:EQoneNW_Extremal}
 There is a polynomial-time algorithm that, given an instance with binary extremal and additive valuations, returns a connected, non-wasteful, and \EQ{1} allocation whenever such an allocation exists.
\end{restatable}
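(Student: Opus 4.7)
The plan is to exploit the rigid structure of binary extremal valuations to reduce the search to a polynomial-sized enumeration of sub-problems, each solvable by a simple dynamic program or bipartite matching.

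\textbf{Step 1 (Characterizing the target utility profile).} For binary additive valuations, non-wastefulness forces every agent's utility to equal her bundle size. EQ1, combined with the presence of any empty bundle, then forces every non-empty bundle to have size at most $1$. Hence exactly one of the following holds: either (a) every agent is active with bundle size in $\{t, t+1\}$, where $t = \lfloor m/n \rfloor$ and exactly $m - tn$ agents receive $t+1$ goods; or (b) exactly $m$ agents are active (each with a singleton bundle) and the remaining $n-m$ are inactive, which requires $m \le n$. Thus the per-agent utility profile is pinned down, and only $O(n)$ gross choices need to be considered.

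\textbf{Step 2 (Path-split induced by extremality).} Partition the agents into $L$ (left-extremal) and $R$ (right-extremal). In any connected non-wasteful allocation, an active left-extremal's bundle is an interval inside her prefix $[1,\ell_i]$, and similarly an active right-extremal's bundle is inside $[r_i,m]$. A direct exchange argument shows that agents can be reassigned to bundles so that every active left-extremal precedes every active right-extremal: if a right-extremal holding bundle $B_i$ precedes a left-extremal holding bundle $B_j$ with $i<j$, swap their identities; since $B_i$ lies further left (still inside the left-extremal's prefix) and $B_j$ lies further right (still inside the right-extremal's suffix), both constraints remain satisfied. Consequently there is a split index $s \in \{0,\dots,m\}$ such that the active left-extremals partition $\{v_1,\dots,v_s\}$ and the active right-extremals partition $\{v_{s+1},\dots,v_m\}$. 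Enumerate all $m+1$ values of $s$ (and, in case~(b), the split of active slots between the two sides).

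\textbf{Step 3 (Solving each side).} For the left side, sort the active left-extremals by $\ell_i$ ascending; a further exchange argument (swap adjacent left-extremals whose prefixes violate the order; the bundles remain valid since the larger prefix accommodates a later position) shows this canonical order is without loss of generality. Then run a dynamic program with state $(j,c)$, interpreted as ``after assigning bundles to the first $j$ sorted agents, $c$ goods have been used.'' Transitions assign the $(j+1)$-th agent a bundle of size $t$ or $t+1$, enforcing $c \le \ell_{\sigma(j)}$; feasibility reduces to reaching $(p, s)$ with the prescribed number of ``size-$(t+1)$'' choices. The right side is handled symmetrically, sorting by $r_i$ descending and scanning right-to-left. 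In case~(b) the sub-problem reduces to bipartite matching between goods and agents (each good matched to an approver), which is polynomial. The overall algorithm runs in time polynomial in $n$ and $m$.

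The main obstacle is the structural characterization in Steps~1 and~2: pinning down that EQ1+NW forces utilities into a narrow band $\{t,t+1\}$, and that connectedness together with extremality forces a clean left-right split of the path. Once these exchange-based structural lemmas are in place, the per-side problem is a routine scheduling DP, and the algorithm follows.
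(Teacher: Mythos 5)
Your proposal is correct and follows essentially the same route as the paper: the same exchange arguments establish that left-extremal agents may be assumed to precede right-extremal ones and that each side may be assumed sorted by interval endpoint, and the combination of \NW{} and \EQ{1} under binary valuations pins all utilities to $\{\lfloor m/n\rfloor, \lceil m/n\rceil\}$. The only difference is cosmetic: where you run a per-side dynamic program over floor/ceiling bundle sizes, the paper proves one further exchange lemma (floor agents precede ceiling agents without loss of generality) and then builds a single canonical allocation by a greedy left-to-right pass, checking it for \EQ{1} at the end.
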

\begin{proof}
We will show that the desired allocation, if it exists, can be obtained by concatenating the solutions from two subproblems, one on a purely left-extremal and the other on a purely right-extremal subinstance.

Suppose there exists a connected, non-wasteful (\NW{}), and \EQ{1} allocation $A$. Let $\sigma$ denote the agent ordering under $A$. By relabeling the agents, we have that $\sigma=(a_1,\dots,a_n)$. We claim that without loss of generality, all left-extremal agents precede all right-extremal agents in $\sigma$. Indeed, if there is a pair of adjacent agents $a_i,a_{i+1}$ where $a_i$ is right-extremal and $a_{i+1}$ is left-extremal, then by an exchange argument we can obtain another connected, non-wasteful, and \EQ{1} allocation $B$ where such a violation does not occur. Specifically, by swapping the bundles of $a_i$ and $a_{i+1}$, we maintain connectedness and non-wasteful. Additionally, for binary additive valuations, non-wastefulness implies that the utility of an agent is equal to the cardinality of its bundle. Therefore, swapping bundles results in swapping the utility values of $a_i$ and $a_{i+1}$, which means that the old and new allocations have identical utility profiles (up to relabeling). Thus, allocation $B$ must also satisfy \EQ{1}.

Let $v_j \in V$ be such that the set $V^L \coloneqq \{v_1,\dots,v_j\}$ is allocated among the left-extremal agents and $V^R \coloneqq \{v_{j+1},\dots,v_m\}$ is allocated among the right-extremal agents in $A$. Then, the subinstance restricted to $V^L$ only has left-extremal valuations and admits a connected, non-wasteful, and \EQ{1} allocation (indeed, the restriction of $A$ to $V^L$ satisfies these properties). A similar implication holds for the purely right-extremal subinstance $V^R$. Therefore, it suffices to provide a polynomial-time algorithm for checking the existence of a connected, non-wasteful, and \EQ{1} allocation in a binary left-extremal instance. Notice that the same algorithm can be used for the right-extremal subinstance via an easy `mirror transformation'. If both subinstances admit desired allocations, then the concatenated allocation is clearly connected and non-wasteful in the original instance. By checking this allocation for \EQ{1}, we obtain the desired algorithm for the original instance. 
Thus, in rest of the proof, we will focus only on left-extremal valuations.

Let $A'$ denote the restriction of allocation $A$ to the left-extremal subinstance, and let $n'$ and $m'$ correspondingly denote the number of agents and items, respectively. Since $A'$ is non-wasteful and \EQ{1} and the valuations are binary, the minimum and maximum utilities under $A'$ must be $\lfloor \frac{m'}{n'} \rfloor$ and $\lceil \frac{m'}{n'} \rceil$, respectively. That is, an agent is either a `floor' or a `ceiling' agent. By an exchange argument, it can be shown that for any pair of left-extremal agents $a_i,a_k$ such that $i < k$, we have $\ell_i \leq \ell_k$ (i.e., $a_i$'s interval finishes before $a_k$'s) without loss of generality. Similarly, it holds that the floor agents precede the ceiling agents (here, the exchange argument transfers a boundary item).

Let $n'_f$ and $n'_c$ denote the number of floor and ceiling agents, respectively. Thus, $n'_f$ and $n'_c$ are the unique pair of non-negative integers satisfying the equations $n'_f + n'_c = n' \text{ and } \displaystyle n'_f \cdot \floor{\frac{m'}{n'}} + n'_c \cdot \ceil{\frac{m'}{n'}} = m'$. (If $m' = kn'$ for some $k \in \mathbb{N}$, then $n'_f = n'$ and $n'_c = 0$.) The desired algorithm considers the agents in the order in which their intervals finish, and constructs an allocation as follows: Starting from the leftmost available good, the algorithm assigns a connected bundle of $\floor{\frac{m'}{n'}}$ goods to each of the first $n'_f$ agents, and a connected bundle of $\ceil{\frac{m'}{n'}}$ goods to each of the next $n'_c$ agents. If this allocation is non-wasteful and \EQ1{}, then the algorithm reports YES and returns the said allocation, otherwise it reports NO.
\end{proof}

\section{Concluding Remarks}
We initiated the study of \EQ{1} allocations under connectedness constraints. The pursuit of connected \EQ{1} allocations satisfying non-trivial efficiency guarantees resulted in computational hardness. This result motivated the exploration of two avenues for tractability: relaxing the efficiency requirement and assuming structured preferences. Some of our results found broader applicability to other fairness notions (e.g., \EF{1}) and negatively-valued items.

Going forward, it would be very interesting to explore the domain of \emph{binary intervals} without the extremal structure in search of tractability results. Another relevant direction could be to map the intractability frontier for binary valuations in terms of \emph{$(a,b)$-sparsity}. Our results establish hardness of a number of problems even under $(4,4)$-sparsity. On the other hand, $(1,b)$-sparse instances are efficiently solvable for any $b$. Resolving the complexity of intermediate cases is a natural next step. 
Finally, extensions to \emph{general graphs}~\citep{BCE+17fair} or settings with \emph{mixed items} involving goods as well as chores~\citep{ACI+19fair} could also be of interest.

\section*{Acknowledgments}
We thank the anonymous reviewers for helpful comments and suggestions. NM is supported by the SERB ECR grant ECR/2018/002967, Computational Aspects of Social Choice: Theory and Practice. RV acknowledges support from ONR\#N00014- 171-2621 while he was affiliated with Rensselaer Polytechnic Institute, and is currently supported by project no. RTI4001 of the Department of Atomic Energy, Government of India. Part of this work was done while RV was supported by the Prof. R Narasimhan postdoctoral award.

\bibliographystyle{abbrvnat} 
\bibliography{references}

\clearpage
\begin{appendices}

\section{Proof of Theorem~\ref{thm:EF1_EQ1_PO}}
\label{sec:Proof_EF1_EQ1_PO}

Let us recall the statement of \Cref{thm:EF1_EQ1_PO}.

\EQonePO*

We will start by discussing the proof of part (a) of \Cref{thm:EF1_EQ1_PO}, followed by that of part (c) which uses the same construction. Parts (b) and (d) use a slightly different construction, and their proofs will be presented subsequently.

\begin{proof} (of part (a))
We will show a reduction from \LESfull{} (\LES{}) and our construction will be similar to that of \Cref{thm:NW_NPcomplete}. Recall that an instance of \LES{} consists of $5p$ clauses (where $p \in \mathbb{N}$) denoted as follows:
$$ \CC = \{U_1, V_1, U_1^\pr, V_1^\pr, \cdots, U_p, V_p, U_p^\pr, V_p^\pr\} \cup \{ C_1, \cdots, C_p \}.$$
We will refer to the first $4p$ clauses as the \emph{core} clauses, and the remaining clauses as the \emph{auxiliary} clauses. The set of variables consists of $p$ \emph{main variables} $x_1,\dots,x_p$ and $4p$ \emph{shadow variables} (our notation for the shadow variables will differ slightly from that used in \Cref{thm:NW_NPcomplete}). 
Each core clause consists of two literals and has the following structure:
$$\forall \, i \in [p], U_i^{} \cap V_i^{} = \{x_i\} \mbox{ and } U_i^\pr \cap V_i^\pr = \{\bar{x}_i\}.$$

Each main variable $x_i$ occurs exactly twice as a positive literal and exactly twice as a negative literal. The main variables only occur in the core clauses. Each shadow variable makes two appearances: as a positive literal in an auxiliary clause and as a negative literal in a core clause. For $i \in [p]$, we will let $p_{i}$, $r_{i}$, $q_{i}$, and $s_{i}$ denote the shadow variables that appear (as negative literals) in the core clauses $U_i$, $V_i$, $U'_i$ and $V'_i$, respectively. That is, $U_{i} := (\bar{p_{i}} \wedge x_{i}), V_{i} := (\bar{r_{i}} \wedge x_{i})$, $U_{i}' := (\bar{q_{i}} \wedge \bar{x_{i}})$, and $V_{i}' := (\bar{s_{i}} \wedge \bar{x_{i}})$. 
Each auxiliary clause consists of four literals, each corresponding to a positive occurrence of a shadow variable. 

The \LES{} problem asks whether, given a set of clauses with the aforementioned structure, there exists an assignment $\tau$ of truth values to the variables such that \emph{exactly one} literal in every core clause and \emph{exactly two} literals in every auxiliary clause evaluate to \true{} under $\tau$.

\textbf{Construction of the reduced instance.}
Let $\phi$ be an instance of LNES. We will begin with the description of the reduced instance.

\textbf{Goods:} For every $i \in [p]$, we introduce one good for every core clause denoted by $U_i^{}$, $V_i^{}$, $U_i^\pr$, $V_i^\pr$, and six goods for every auxiliary clause denoted by $C_{i}^{L_{1}}, C_{i}^{L_{2}}, S_{i}^{1}, S_{i}^{2}, C_{i}^{R_{1}}, C_{i}^{R_{2}}$. We refer to $U_i^{}$, $V_i^{}$, $U_i^\pr$, $V_i^\pr$ as the \emph{core} goods, $C_{i}^{L_{1}}, C_{i}^{L_{2}}, C_{i}^{R_{1}}, C_{i}^{R_{2}}$ as the \emph{auxiliary} goods, and $S_{i}^{1}, S_{i}^{2}$ as the \emph{separator goods}. 
Next, we introduce two goods for each shadow variable, i.e., corresponding to each of $p_{i}, q_i, r_{i}, s_{i}$, we introduce the following \emph{shadow} goods: $p_{i}^{1}, p_{i}^{2}, r_{i}^{1}, r_{i}^{2}, q_{i}^{1}, q_{i}^{2}, s_{i}^{1}, s_{i}^{2}$. Finally, we introduce $2p$ \emph{dummy} goods denoted by $D_1,D_1^{'}, \ldots, D_p, D_{p}^{'}$, two additional \emph{separator} goods $S_{0}^{1}, S_{0}^{2}$, and three \emph{special} goods $S_{1}, S_{2}, S_{3}$. Thus, the total number of goods is $m = 4p+6p+8p+2p+2+3 = 20p+5$. The goods are arranged as shown in \Cref{fig:Thm2-EQ1-PO}.

\begin{figure*}[t]
\centering
    $U_{1}$, $p_{1}^{1}$, $p_{1}^{2}$, $r_{1}^{1}$, $r_{1}^{2}$, $V_{1}$, $U_{1}'$, $q_{1}^{1}$, $q_{1}^{2}$, $s_{1}^{1}$, $s_{1}^{2}$, $V_{1}'$, $\cdots$, $U_{p}$, $p_{p}^{1}$, $p_{p}^{2}$, $r_{p}^{1}$, $r_{p}^{2}$, $V_{p}$, $U_{p}'$, $q_{p}^{1}$, $q_{p}^{2}$, $s_{p}^{1}$, $s_{p}^{2}$, $V_{p}'$\\
    
    (Core and shadow goods)\\
    \vspace{0.1in}
    %
    %
    $S_{0}^{1}$, $S_{0}^{2}$, $C_{1}^{L_{1}}$, $C_{1}^{L_{2}}$, $S_{1}^{1}$, $S_{1}^{2}$, $C_{1}^{R_{1}}$, $C_{1}^{R_{2}}$, $\cdots$, $C_{p}^{L_{1}}$, $C_{p}^{L_{2}}$, $S_{p}^{1}$, $S_{p}^{2}$, $C_{p}^{R_{1}}$, $C_{p}^{R_{2}}$\\
    
    (Separator and auxiliary goods)\\
    \vspace{0.1in}
    %
    %
    $D_{1}$, $D_{1}^{'}$, $D_{2}$, $D_{2}^{'}$, $\cdots$, $D_{p}$, $D_{p}^{'}$, $S_{1}$, $S_{2}$, $S_{3}$\\
    (Dummy and special goods)
    \caption{The instance used in the proof of part (a) of \Cref{thm:EF1_EQ1_PO}. The path graph is such that the goods in the top row are to the left of those in the middle row, which are to the left of those in the bottom row.}
    \label{fig:Thm2-EQ1-PO}
\end{figure*}

\textbf{Agents:} For every main variable $x_i$, we will introduce two agents $a_{x_i}$ and $a_{\bar{x}_i}$ for the two literals; these are referred to as \emph{main agents} of the \emph{positive} and \emph{negative} type, respectively. For every $i \in [p]$, the agent $a_{x_i}$ approves (i.e., values at $1$) the goods $U_i^{}, V_i^{}, D_i^{}, D_i^\pr$, while the agent $a_{\bar{x_i}}$ approves the goods $U_i^\pr, V_i^\pr, D_i^{}, D_i^\pr$. 
We also introduce a \emph{shadow agent} for every shadow variable. 
If $p_{i}$ is a shadow variable occurring in core clause $U_{i}$ and auxiliary clause $C_{j}$, then the corresponding shadow agent $p_{i}$ approves the shadow goods $p_{i}^{1}, p_{i}^{2}$ and the auxiliary goods $C_{j}^{L_{1}}, C_{j}^{L_{2}}, C_{j}^{R_{1}}, C_{j}^{R_{2}}$. The valuations of the other shadow agents $r_{i}, q_{i}, s_{i}$ are defined analogously. Next, we introduce $p+1$ \emph{separator agents} $t_{0}, \dots, t_{p}$ such that for every $i \in \{0\} \cup [p]$, $t_{i}$ approves two separator goods $S_{i}^{1}, S_{i}^{2}$. Lastly, we introduce \emph{special agent} $a_{s}$ that approves the special goods $S_{1}, S_{2}, S_{3}$.

This completes the construction of our reduction. Notice that the constructed instance is $(6, 4)$\emph{-sparse}. Before presenting the proof of equivalence, we will establish in \Cref{lem:EQ1+PO-utilities} that each agent (except for the special agent) has a utility of $2$ under any \EQ{1} and Pareto optimal allocation.

\begin{lemma}
\label{lem:EQ1+PO-utilities}
In any $\EQ{1}+\PO{}$ allocation, the utility of the special agent $a_s$ is equal to $3$ and that of every other agent is equal to $2$.
\end{lemma}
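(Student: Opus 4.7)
The plan is to establish the utility profile in three stages: first pin down $u_{a_s} = 3$ using Pareto optimality, then use \EQ{1} together with the capacity limits of the separator agents to confine every other utility to $\{2, 3\}$, and finally rule out utility $3$ for any non-special agent via a case analysis on agent type.

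First I would establish $u_{a_s}(A_{a_s}) = 3$. Since only $a_s$ values the three rightmost goods $\{S_1, S_2, S_3\}$ and \PO{} implies completeness for binary valuations, any $S_\ell$ held by an agent $a \neq a_s$ is wasted; a standard shift argument---placing $a_s$ in the rightmost slot with bundle $\{S_1, S_2, S_3\}$ and contracting $a$'s bundle to release these wasted goods (preserving connectedness, since the special goods form a suffix of the path)---yields a strict Pareto improvement, contradicting \PO{}. Next, every bundle must be non-empty: if $A_i = \emptyset$, \EQ{1} applied to $(a_i, a_s)$ would demand $0 \geq u_{a_s} - 1 = 2$, which is impossible. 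The same \EQ{1} instance with $(a_i, a_s)$ then gives $u_i \geq 2$ for every agent $a_i$. For the upper bound, each separator agent $t_j$ values only $\{S_j^1, S_j^2\}$, so $u_{t_j} \leq 2$; \EQ{1} applied to $(t_j, a_i)$ yields $u_i \leq u_{t_j} + 1 \leq 3$. Hence $u_i \in \{2, 3\}$ for every non-special agent.

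The core of the argument is to show that no non-$a_s$ agent actually attains utility $3$. Separator agents are immediate from $u_{t_j} \leq 2$. For a main agent $a_{x_i}$ (the case of $a_{\bar{x}_i}$ is symmetric), its four valued goods split as $\{U_i, V_i\}$ in the first (core-and-shadow) block and $\{D_i, D_i'\}$ in the dummy block, separated by the entire middle block. Any connected bundle containing three of the four goods must span at least $[V_i, D_i]$, and a short case check over the four possible triples shows that such a span always engulfs $U_i'$, $V_i'$, and $D_i$; this leaves $a_{\bar{x}_i}$ with at most the single valued good $D_i'$ (and only when the triple is $\{U_i, V_i, D_i\}$), so $u_{\bar{x}_i} \leq 1$, contradicting the lower bound $u_{\bar{x}_i} \geq 2$ established above.

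For a shadow agent, say $p_i$ associated with the auxiliary clause $C_j$, the six valued goods are $\{p_i^1, p_i^2\}$ in the first block and the two pairs $\{C_j^{L_1}, C_j^{L_2}\}$ and $\{C_j^{R_1}, C_j^{R_2}\}$ in the middle block, separated by the inner separators $\{S_j^1, S_j^2\}$. A utility-$3$ bundle falls into two cases. If it contains at least one shadow good and at least one auxiliary good, then being connected it spans across positions $12p+1, 12p+2$ and hence engulfs $\{S_0^1, S_0^2\}$, forcing $u_{t_0} = 0$. If instead it contains three auxiliary goods, a direct check confirms that every three-subset of $\{C_j^{L_1}, C_j^{L_2}, C_j^{R_1}, C_j^{R_2}\}$ spans at least from $C_j^{L_2}$ to $C_j^{R_1}$, so the bundle engulfs $\{S_j^1, S_j^2\}$ and forces $u_{t_j} = 0$. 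Either way, \EQ{1} is violated through the relevant separator agent. The main obstacle is keeping this two-way case analysis tight and checking that the spatial span really does engulf the separator goods in each subcase; once that is settled, combining with $u_{a_s} = 3$ yields the stated utility profile.
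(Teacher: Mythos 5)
Your proof is correct and follows essentially the same route as the paper's: Pareto optimality pins the special goods on $a_s$, \EQ{1} forces every other agent's utility up to at least $2$, and connectedness on the path caps it at $2$ because a utility-$3$ bundle would have to engulf either separator goods or the partner main agent's valued goods, starving some agent below $2$. The only cosmetic difference is that the paper first locks the separator goods onto the separator agents (since they approve exactly those two goods and need utility $\geq 2$) and then argues no other agent's bundle can cross them, whereas you derive the upper bound of $3$ via \EQ{1} against a separator agent and eliminate the value $3$ by a case analysis; both arguments hinge on the same spatial obstruction.
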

\begin{proof} (of \Cref{lem:EQ1+PO-utilities})
Notice that in any Pareto optimal allocation $A$, the special goods $S_{1}, S_{2}, S_{3}$ must be allocated to the special agent $a_{s}$. This is because these goods lie at the end of the path and are uniquely valued by $a_{s}$, and therefore any allocation $A'$ that does not assign these goods to $a_s$ can be shown to be Pareto dominated by another allocation that is identical to $A'$ except for the assignment of the special goods to the special agent. Therefore, the utility of $a_s$ under Pareto optimal allocation must be equal to $3$ (recall that $a_s$ does not value any good other than the special goods). 

Now let $A$ denote any \EQ{1} and Pareto optimal allocation. Since the utility of the special agent in $A$ is equal to $3$, \EQ{1} requires that the utility of every other agent in $A$ is at least $2$.

Since each separator agent $t_{0},t_1,\dots,t_{p}$ approves exactly two goods, it must be that for every $i \in \{0,1,\dots,p\}$, the separator goods $S_i^1,S_i^2$ are assigned to $t_i$ in $A$. Furthermore, since the separator goods $S_i^1,S_i^2$ are placed next to each other on the path and these are the only goods approved by $t_i$, we can assume, without loss of generality, that these are the only goods assigned to $t_i$.

Now consider a shadow agent $p_{i}$ that appears in the core clause $U_{i}$ and the auxiliary clause $C_{j}$. Thus, $p_{i}$ approves two shadow goods $p_{i}^{1}, p_{i}^{2}$ and four auxiliary goods $C_{j}^{L_{1}},C_{j}^{L_{2}},C_{j}^{R_{1}},C_{j}^{R_{2}}$. Note that $p_i$ cannot receive more than two approved goods; if it does, then by connectedness constraint, its bundle should necessarily include separator goods whose assignment has already been fixed. Thus, each shadow agent $p_i$ (analogously $q_i$, $r_i$, $s_i$) will have a utility of exactly $2$ in $A$.

A similar argument shows that for any $i \in [p]$, the main agent of positive (or negative) type $a_{x_{i}}$ (or $a_{\bar{x_{i}}}$) will have a utility of at most $2$ since all such agents approve two core goods and two dummy goods. We therefore have that in any \EQ{1} and Pareto optimal allocation, all agents other than the special agent achieve a utility of exactly $2$. This completes the proof of \Cref{lem:EQ1+PO-utilities}.
\end{proof}

\textbf{The Forward Direction.} 
Given a satisfying assignment $\tau$ for \LES{}, we will construct the desired allocation as follows:
\begin{itemize}
    \item Allocate the special goods $S_{1}, S_{2}, S_{3}$ to the special agent $a_{s}$.
    
    \item For each $i \in \{0,1,\dots,p\}$, the  separator agent $t_i$ receives the separator goods $S_{i}^{1}$ and $S_{i}^{2}$.
    
    \item If $\tau(x_{i}) = 1$, then allocate $\{U_{i}, p_{i}^{1}, p_{i}^{2}, r_{i}^{1}, r_{i}^{2}, V_{i}\}$ to agent $a_{x_{i}}$ and $\{D_{i}, D_{i}'\}$ to agent $a_{\bar{x}_{i}}$. In addition, allocate $\{U_{i}', q_{i}^{1},q_{i}^{2}\}$ to $q_{i}$, and $\{s_{i}^{1}, s_{i}^{2}, V_{i}'\}$ to $s_{i}$. Recall that $q_{i}$ and $s_{i}$ are the shadow variables that appear as negated literals in the core clauses $U_{i}'$ and $V_{i}'$, respectively, along with $\bar{x_{i}}$.
    
    Otherwise, if $\tau(x_{i}) = 0$, then allocate $\{U'_{i}, q_{i}^{1}, q_{i}^{2}, s_{i}^{1}, s_{i}^{2}, V'_{i}\}$ to agent $a_{\bar{x}_{i}}$ and $\{D_{i}, D_{i}'\}$ to agent $a_{x_{i}}$. In addition, allocate $\{U_{i}, p_{i}^{1},p_{i}^{2}\}$ to $p_{i}$, and $\{r_{i}^{1}, r_{i}^{2}, V_{i}\}$ to $r_{i}$.
    
    \item Finally, for every $j \in [p]$, allocate the sets $\{C_{j}^{L_{1}}, C_{j}^{L_{2}}\}$ and $\{C_{j}^{R_{1}}, C_{j}^{R_{2}}\}$ to the two shadow agents whose corresponding literals satisfy the auxiliary clause $C_j$.
    
\end{itemize}

Observe that each good is assigned to exactly one agent in the aforementioned allocation. Furthermore, each agent's bundle is connected; in particular, each shadow agent either receives a set of adjacent core and shadow goods (if the corresponding shadow variable evaluates to false under $\tau$), or a set of adjacent auxiliary goods (if it evaluates to true).

It is easy to verify that the utility of the special agent is equal to $3$, and that of every other agent is equal to $2$. Thus, the allocation is \EQ{1}.

We will now argue that the above allocation, say $A$, is Pareto optimal. Suppose, for contradiction, that another allocation $A'$ Pareto dominates $A$. Since the special agent and each separator agent receives all of its approved goods under $A$, the utilities of these agents under $A$ and $A'$ must be equal. Furthermore, if a main agent has a strictly higher utility under $A'$, then by the connectedness constraint, its bundle must contain a separator good, which leads to an infeasible assignment since these goods are necessarily allocated to the separator agents. A similar argument shows that a shadow agent, too, cannot receive a higher utility under $A'$. Therefore, $A$ must be Pareto optimal.

\textbf{The Reverse Direction.} We will now show how to recover an \LES{} assignment given a connected \EQ{1} and Pareto optimal allocation, say $A$. 

Since $A$ is \EQ{1} and Pareto optimal, we know from \Cref{lem:EQ1+PO-utilities} that the special agent receives three approved goods and every other agent receives two approved goods under $A$. Thus, in particular, 
for every $i \in \{0,1,\dots,p\}$, the separator goods $S_i^1,S_i^2$ are allocated to the separator agent $t_i$. Along with the connectedness constraint, this implies that for every $i \in [p]$, at least one of the main agents $a_{x_i}$ or $a_{\bar{x}_i}$ will achieve a utility of $2$ by either receiving the interval $U_{i}, p_{i}^{1}, p_{i}^{2}, r_{i}^{1}, r_{i}^{2}, V_{i}$ or $U_{i}', q_{i}^{1}, q_{i}^{2}, s_{i}^{1}, s_{i}^{2}, V_{i}'$. This, in turn, forces \emph{at least} one pair of shadow agents---either $\{p_{i}, r_{i}\}$ or $\{q_{i}, s_{i}\}$---to obtain their utilities from the auxiliary goods. 

We will now show that \emph{exactly} one of these two pairs of agents derive their utility from the shadow goods, while the other pair meets the utility requirement though the auxiliary goods. Indeed, since there are $4p$ auxiliary goods (corresponding to $p$ auxiliary clauses), at most $2p$ shadow agents can obtain the desired utility from the auxiliary goods. Therefore, for every $i \in [p]$, exactly one pair of shadow agents---either $\{p_{i}, r_{i}\}$ or $\{q_{i}, s_{i}\}$---are assigned shadow goods, while the other pair receives auxiliary goods. Note that this observation also shows that for every $i \in [p]$, exactly one out of $a_{x_{i}}$ or $a_{\bar{x_{i}}}$ is assigned the dummy goods $\{D_{i}, D_{i}'\}$.

Overall, we have that one set of $p$ main agents gets exactly two core goods each (we will refer them as the ``lucky'' agents), while the other set of $p$ main agents gets two dummy goods each (the ``unlucky'' agents). Notice that the two main agents corresponding to a main variable cannot both be lucky, nor can both be unlucky due to the argument presented earlier.

This brings us to a natural way of deriving an \LES{} assignment $\tau$ from the allocation $A$. If the main agent of the positive (respectively, negative) type is unlucky, then we let $\tau(x_i) = 0$ (respectively, $\tau(x_i) = 1$). Furthermore, if $A$ allocates a core good to a shadow agent, then the corresponding shadow variable is set to $0$, while shadow variables corresponding to shadow agents who receive auxiliary goods are set to $1$. Note that exactly $2p$ of the $4p$ shadow variables are set to $1$ under this assignment and there are no conflicting assignments, implying that $\tau$ is indeed a valid solution to the \LES{} instance. This completes the proof of part (a) of \Cref{thm:EF1_EQ1_PO}.
\end{proof}

\begin{proof} (of part (c))
To prove part (c), we first observe that the argument in the forward direction remains the same as in part (a), since the allocation constructed in the proof is \EQ{1} and satisfies the desired egalitarian welfare condition. 

In the reverse direction, it is possible that under the given allocation, say $A$, the special agent $a_s$ no longer receives all three special goods. However, since the egalitarian welfare of $A$ is at least $2$, each agent must receive at least two approved goods. Along with connectedness, this means that either $S_1$ or $S_3$ is not assigned to $a_s$ under $A$. Since the special goods are not approved by any other agent, we can modify $A$ to obtain another allocation, say $A'$, that is identical to $A$ except for the allocation of the special goods, which are all assigned to the special agent. It is easy to see that $A'$ is connected, \EQ{1}, and has egalitarian welfare at least $2$. By an identical argument as in part (a), we can now infer a satisfying \LES{} assignment.
\end{proof}

We now move on to the proof of part (b) of \cref{thm:EF1_EQ1_PO}, followed by that of part (d) which uses a similar construction. 

\begin{proof} (of part (b))
We will once again show a reduction from \LESfull{} (\LES{}).

\textbf{Construction of the reduced instance.} Let $\phi$ be an instance of \LES{}. We will begin with the description of the reduced instance.

\textbf{Goods:} For every $i \in [p]$, we introduce one \emph{core} good for every core clause denoted by $U_i^{}$, $V_i^{}$, $U_i^\pr$, $V_i^\pr$, and two \emph{auxiliary} goods for every auxiliary clause denoted by $C_{i}^{L}, C_{i}^{R}$.  
Next, we introduce two goods for each shadow variable, i.e., corresponding to each of $p_{i}, q_i, r_{i}, s_{i}$, we introduce the \emph{shadow} goods $p_{i}^{1}, p_{i}^{2}, r_{i}^{1}, r_{i}^{2}, q_{i}^{1}, q_{i}^{2}, s_{i}^{1}, s_{i}^{2}$. Finally, we introduce $3p$ \emph{dummy} goods 
 $D_1^1,D_1^2,D_1^3, \dots, D_p^1,D_p^2,D_p^3$ and two \emph{separator} goods $S_{0}^{1}, S_{0}^{2}$. Thus, the total number of goods is $m = 4p+2p+8p+3p+2 = 17p+2$. The goods are arranged as shown in \Cref{fig:EF1_PO_appendix}.
\begin{figure*}[t]
\centering
    $U_{1}$, $p_{1}^{1}$, $p_{1}^{2}$, $r_{1}^{1}$, $r_{1}^{2}$, $V_{1}$, $U_{1}'$, $q_{1}^{1}$, $q_{1}^{2}$, $s_{1}^{1}$, $s_{1}^{2}$, $V_{1}'$, $\cdots$, $U_{p}$, $p_{p}^{1}$, $p_{p}^{2}$, $r_{p}^{1}$, $r_{p}^{2}$, $V_{p}$, $U_{p}'$, $q_{p}^{1}$, $q_{p}^{2}$, $s_{p}^{1}$, $s_{p}^{2}$, $V_{p}'$\\
    (Core and shadow goods)\\
    \vspace{0.1in}
    $S_{1}$, $S_{2}$, $C_{1}^{L}$, $C_{1}^{R}$, $\cdots$, $ C_{p}^{L}$, $C_{p}^{R}$, $D_{1}^{1}$, $D_{1}^{2}$, $D_{1}^{3}$, $\cdots$, $D_{p}^{1}$, $D_{p}^{2}$, $D_{p}^{3}$\\
    (Separator and auxiliary goods followed by the dummy goods)
    %
    \caption{The instance used in proof of part (b) of \Cref{thm:EF1_EQ1_PO}. The path graph is constructed such that the goods in the top row are to the left of those in the bottom row.}
    \label{fig:EF1_PO_appendix}
\end{figure*}

\textbf{Agents:} As before, we have the \emph{main agents} of the \emph{positive} and \emph{negative} type for every main variable $x_i$, denoted by $a_{x_i}$ and $a_{\bar{x}_i}$, respectively. 
For every $i \in [p]$, the agent $a_{x_i}$ approves the goods $U_i^{}, V_i^{}, D_i^1,D_i^2,D_i^3$, while the agent $a_{\bar{x_i}}$ approves the goods $U_i^\pr, V_i^\pr, D_i^1,D_i^2,D_i^3$. We also introduce a \emph{shadow agent} for every shadow variable. If $p_{i}$ is a shadow variable occurring in core clause $U_{i}$ and auxiliary clause $C_{j}$, then the corresponding shadow agent $p_{i}$ approves the shadow goods $p_{i}^{1}, p_{i}^{2}$ and the auxiliary goods $C_{j}^{L}, C_{j}^{R}$. The valuations of the other shadow agents $r_{i}, q_{i}, s_{i}$ are defined analogously. Lastly, we introduce a \emph{separator agent} $a_0$ that approves the two separator goods $S_1, S_2$. 
This completes the construction of the reduced instance. Notice that the constructed instance is $(5,4)$\emph{-sparse}. Before presenting the proof of equivalence, we will prove a structural result in \Cref{lem:EF1+PO-utility}.

\begin{lemma}
\label{lem:EF1+PO-utility}
In any $\EF{1}+\PO{}$ allocation, the utility of the separator agent $a_{0}$ is equal to $2$. Moreover, for every $i \in [p]$, exactly one of $a_{x_{i}}$ or $a_{\bar{x_{i}}}$ is allocated the triplet of goods $\{D_{i}^{1}, D_{i}^{2}, D_{i}^{3}\}$.
\end{lemma}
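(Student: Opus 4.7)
The plan is to establish the two claims in sequence, first pinning down the separator agent's utility and then using the forced path decomposition to prove the triplet structure.

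I would first show that $u_0(A_0) = 2$ by arguing that both separator goods $S_1, S_2$ must be assigned to $a_0$. Since $a_0$ approves only these two adjacent goods, $u_0(A_0) \le 2$. Suppose for contradiction that $u_0(A_0) \le 1$. If both $S_1, S_2$ lie in a single agent $a_j$'s bundle for $j \ne 0$, then $a_0$ envies $a_j$ by two goods, violating EF1. Otherwise, at least one separator good sits at the boundary of some $a_j$'s bundle (or is unassigned), is not valued by $a_j$, and can be transferred to $a_0$ while preserving connectedness of every bundle and strictly improving $u_0$ --- a Pareto improvement contradicting PO. Here I would also note that, by the same PO-based transfer argument, $a_0$'s original bundle may be assumed to consist only of the separator goods. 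Hence $A_0 = \{S_1, S_2\}$.

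This assignment partitions the remaining path into a \emph{left segment} (the core and shadow goods) and a \emph{right segment} (the auxiliary goods followed by the dummy goods), and each other agent's connected bundle must lie entirely within one segment. The maximum attainable utilities then follow from direct inspection: each main agent can achieve at most utility $3$ (by holding their full dummy triplet $\{D_i^1,D_i^2,D_i^3\}$) or utility $2$ (by holding their core-pair interval such as $\{U_i, p_i^1, p_i^2, r_i^1, r_i^2, V_i\}$), and each shadow agent can achieve at most utility $2$.

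For the second claim, I would first argue by PO that at least one main agent achieves utility $3$ via a full dummy triplet: otherwise some triplet is either unassigned or split across non-valuing bundles, and a reallocation that moves some $a_{x_j}$ into the right segment to take the full triplet $\{D_j^1, D_j^2, D_j^3\}$ (absorbing her released left-segment bundle into adjacent agents' bundles, which is possible because those agents do not lose value) would Pareto-dominate. Once some main agent has utility $3$, EF1 forces every agent's utility to be at least $2$. Since a main agent confined to the right segment with fewer than three of $\{D_i^1,D_i^2,D_i^3\}$ has utility strictly less than $2$ (she values nothing else in that segment), the triplet cannot be split between $a_{x_i}$ and $a_{\bar{x}_i}$. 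Since only these two agents value $\{D_i^1,D_i^2,D_i^3\}$, PO precludes the triplet being entirely unassigned or absorbed into a non-valuing bundle, and obviously both cannot hold it simultaneously. Therefore, for each $i$, exactly one of $a_{x_i}, a_{\bar{x}_i}$ receives the triplet.

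The main obstacle is the careful bookkeeping required in Step 1 (exhibiting explicit Pareto-improving transfers of $S_1, S_2$ to $a_0$ while preserving connectedness of every bundle) and in ruling out the split triplet configuration, since the connectedness constraint prevents PO from collapsing to plain non-wastefulness. The exchange arguments must track whether a separator or boundary dummy good sits at the extreme or interior of a non-valuing agent's bundle, and in the latter case the Pareto-dominating reallocation needs to be built by reassigning whole intervals rather than single goods.
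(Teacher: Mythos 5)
Your treatment of the first claim (both separator goods must go to $a_0$, hence $u_0(A_0)=2$ and every other agent's bundle lies entirely in the left or the right segment) matches the paper's argument. For the second claim you take a genuinely different route---first establish that \emph{some} main agent attains utility $3$, then use \EF{1} to force every utility up to $2$, and only then rule out split triplets---and the first step of that route has a genuine gap. The negation of ``some main agent holds a full dummy triplet'' is \emph{not} ``some triplet is unassigned or split across non-valuing bundles'': it also includes the configuration in which, for every $i$, the triplet $\{D_i^1,D_i^2,D_i^3\}$ is split between its two \emph{valuing} agents $a_{x_i}$ and $a_{\bar{x}_i}$ (one holding a single dummy good, the other holding two). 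Your proposed Pareto improvement---move a left-segment main agent into the right segment to take the whole triplet---does not apply there: both main agents are already in the right segment, and handing the full triplet to one of them strictly lowers the other's utility. Nor does \EF{1} exclude this configuration on its own: the agent holding one dummy good does not envy, up to one good, the agent holding the other two. You cannot invoke your later ``every utility is at least $2$'' threshold either, since that threshold is derived from the very utility-$3$ agent whose existence is at issue; the argument is circular at exactly this point.

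The missing ingredient is the paper's core-good swap: if $a_{x_i}$ and $a_{\bar{x}_i}$ hold only dummy goods with utilities $1$ and $2$ respectively, the allocation is Pareto dominated by one that gives the utility-$1$ agent her core good $U_i$ (currently held by an agent who does not value it) and the full triplet to the other agent, leaving the former's utility at $1$ and raising the latter's from $2$ to $3$. With that case handled, the remainder of your plan---existence of a utility-$3$ agent, the \EF{1} threshold of $2$, and the \PO{}-based exclusion of non-valuing or empty holders---goes through. The paper organizes the argument differently: it first shows by \PO{} that each triplet can only be distributed among $a_{x_i}$ and $a_{\bar{x}_i}$, and then applies the swap above to rule out the split directly, never needing the utility-$3$ existence claim as an intermediate step; that ordering is what lets it avoid the circularity you run into.
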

\begin{proof} (of \Cref{lem:EF1+PO-utility})
Observe that in any \EF{1} and Pareto optimal allocation $A$, the separator goods $S_{1}, S_{2}$ must be allocated to separator agent $a_{0}$. Indeed, $S_{1}, S_{2}$ are valued only by $a_{0}$. If $S_{1}, S_{2}$ are allocated to two distinct agents in some allocation $A^\pr{}$, then $A^\pr{}$ can be shown to be Pareto dominated by another allocation identical to $A^\pr{}$ except for the assignment of separator goods to the separator agent. Otherwise, if $S_{1}, S_{2}$ are allocated to the same agent (different from $a_{0}$) in $A^\pr{}$, then \EF{1} is violated from $a_{0}$'s perspective. Therefore, the utility of the separator agent $a_{0}$ under any \EF{1} and Pareto optimal allocation is equal to $2$. This implies that no main or shadow agent can obtain utility from goods in both rows of \Cref{fig:EF1_PO_appendix}.

To prove the second part of the lemma, we first observe that for every $i \in [p]$, the goods $D_{i}^{1}, D_{i}^{2}, D_{i}^{3}$ must be assigned between the main agents $a_{x_{i}}$ and $a_{\bar{x_{i}}}$ in any Pareto optimal allocation. This is because these goods are approved only by $a_{x_{i}}$ and $a_{\bar{x_{i}}}$ and no other agent. Furthermore, these agents can obtain a utility of at most $2$ from the core goods. Therefore, any allocation $A$ in which one or more of the dummy goods $D_{i}^{1}, D_{i}^{2}, D_{i}^{3}$ are assigned to agents other than $a_{x_{i}}$ and $a_{\bar{x_{i}}}$ can be shown to be Pareto dominated by another allocation, say $A'$, that is identical to $A$ except for the assignment of these dummy goods, which are allocated exclusively among $a_{x_{i}}$ and $a_{\bar{x_{i}}}$. 

Next, suppose that both $a_{x_{i}}$ and $a_{\bar{x_{i}}}$ are allocated only the dummy goods $D_{i}^{1}, D_{i}^{2}, D_{i}^{3}$ in a Pareto optimal allocation, say $A$. Assume, without loss of generality, that the utilities of $a_{x_{i}}$ and $a_{\bar{x_{i}}}$ in $A$ are $1$ and $2$, respectively. Then, $A$ can be shown to be Pareto dominated by another allocation that is identical to $A$ with the exception that one of the core goods, say $U_i$, is assigned to $a_{x_{i}}$, and the triplet $\{D_{i}^{1}, D_{i}^{2}, D_{i}^{3}\}$ to $a_{\bar{x_{i}}}$, contradicting the Pareto optimality of $A$. Thus, the triplet of dummy goods  $\{D_{i}^{1}, D_{i}^{2}, D_{i}^{3}\}$ must be completely assigned to either $a_{x_{i}}$ or $a_{\bar{x_{i}}}$.
\end{proof}

\textbf{The Forward Direction.} Given a satisfying assignment $\tau$ for \LES{}, we will construct the desired allocation as follows:
\begin{itemize}
    \item Allocate the separator goods $S_{1}, S_{2}$ to the separator agent $a_{0}$.
    \item If $\tau(x_{i})=1$, then allocate $\{U_{i}, p_{i}^{1},p_{i}^{2}, r_{i}^{1}, r_{i}^{2}, V_{i}\}$ to agent $a_{x_{i}}$ and $\{D_{i}^{1}, D_{i}^{2}, D_{i}^{3}\}$ to agent $a_{\bar{x_{i}}}$. In addition, allocate $\{U_{i}', q_{i}^{1},q_{i}^{2}\}$ to $q_{i}$, and $\{s_{i}^{1}, s_{i}^{2}, V_{i}'\}$ to $s_{i}$. Recall that $q_{i}$ and $s_{i}$ are the shadow variables that appear as negated literals in the core clauses $U_{i}'$ and $V_{i}'$, respectively, along with $\bar{x_{i}}$.
    
    Otherwise, if $\tau(x_{i}) = 0$, then allocate $\{U'_{i}, q_{i}^{1}, q_{i}^{2}, s_{i}^{1}, s_{i}^{2}, V'_{i}\}$ to agent $a_{\bar{x}_{i}}$ and $\{D_{i}^{1}, D_{i}^{2}, D_{i}^{3}\}$ to agent $a_{x_{i}}$. In addition, allocate $\{U_{i}, p_{i}^{1},p_{i}^{2}\}$ to $p_{i}$, and $\{r_{i}^{1}, r_{i}^{2}, V_{i}\}$ to $r_{i}$.
    
    \item Finally, for every $j \in [p]$, allocate $\{C_{j}^{L}\}$ and $\{C_{j}^{R}\}$ to the two shadow agents whose corresponding literals satisfy the auxiliary clause $C_j$.
\end{itemize}

Notice that in the constructed allocation, each good is allocated to exactly one agent, and each agent receives a connected interval. Also, the utility of the separator agent is $2$, and exactly one agent corresponding to each variable receives a triplet of the corresponding dummy goods. Note that the utility of each main agent is either $2$ or $3$, and the utility of each shadow agent is either $1$ or $2$. Furthermore, any main agent is allocated at most two goods valued by any shadow agent. Hence, the constructed allocation is \EF{1}.

We will now argue that the above allocation, say $A$, is Pareto optimal. Suppose for contradiction, that another allocation $A'$ Pareto dominates $A$. The second part of \cref{lem:EF1+PO-utility} implies that in any \EF{1} and Pareto optimal allocation, for every $i \in [p]$, the main agents $a_{x_{i}}$ and $a_{\bar{x_{i}}}$ cannot both have utility $3$. Thus, the utilities of the main agents under $A$ and $A'$ should be equal. Furthermore, one of the main agents corresponding to each variable will be allocated shadow goods corresponding to a pair of shadow agents (either \{$p_{i}, r_{i}\}$ or $\{q_{i}, s_{i}\}$). This implies that for at least one of these pairs, the two shadow agents should each receive a utility of $1$ under $A'$. Hence, by a similar argument as above, all shadow agents will also have the same utility under $A$ and $A'$, establishing that $A'$ cannot Pareto dominate $A$, as desired.

\textbf{The Reverse Direction.} We will now show a way to recover an \LES{} assignment given a connected \EF{1} and Pareto optimal allocation, say $A$.

Since $A$ is \EF{1} and Pareto optimal, we know from \cref{lem:EF1+PO-utility} that the separator agent receives the two approved goods, and for each variable $x_{i}$, exactly one of the corresponding main agents $a_{x_{i}}$ or $a_{\bar{x_{i}}}$ receives the triplet of dummy goods $\{D_{i}^{1}, D_{i}^{2}, D_{i}^{3}\}$. By \EF{1}, the other main agent will achieve a utility $2$ by either receiving the interval $\{U_{i}, p_{i}^{1}, p_{i}^{2}, r_{i}^{1}, r_{i}^{2}, V_{i}\}$ or $\{U_{i}', q_{i}^{1}, q_{i}^{2}, s_{i}^{1}, s_{i}^{2}, V_{i}'\}$. This, in turn, forces \emph{at least} one pair of shadow agents---either $\{p_{i}, r_{i}\}$ or $\{q_{i}, s_{i}\}$---to obtain their utilities from the auxiliary goods. Note that for any such pair, both agents will have a utility of at least $1$ due to \EF{1} condition.

We will now show that \emph{exactly} one of the two pairs of shadow agents derive their utility from the shadow goods, while the other pair meets the utility requirement though the auxiliary goods. Indeed, since there are $2p$ auxiliary goods (corresponding to $p$ auxiliary clauses), at most $2p$ shadow agents can obtain the desired utility from the auxiliary goods. Therefore, for every $i \in [p]$, exactly one pair of shadow agents---either $\{p_{i}, r_{i}\}$ or $\{q_{i}, s_{i}\}$---are assigned shadow goods, while the other pair receives auxiliary goods.

Overall, we have that one set of $p$ main agents gets exactly two core goods each (we will refer them as the ``lucky'' agents), while the other set of $p$ main agents gets three dummy goods each (the ``unlucky'' agents). Notice that the two main agents corresponding to a main variable cannot both be lucky, nor can both be unlucky due to the argument presented in \Cref{lem:EF1+PO-utility}.

This brings us to a natural way of deriving an \LES{} assignment $\tau$ from the allocation $A$. If the main agent of the positive (respectively, negative) type is unlucky, then we let $\tau(x_i) = 0$ (respectively, $\tau(x_i) = 1$). Furthermore, if $A$ allocates a core good to a shadow agent, then the corresponding shadow variable is set to $0$, while shadow variables corresponding to shadow agents who receive auxiliary goods are set to $1$. Note that exactly $2p$ of the $4p$ shadow variables are set to $1$ under this assignment and there are no conflicting assignments, implying that $\tau$ is indeed a valid solution to the \LES{} instance. This completes the proof of part (b) of \Cref{thm:EF1_EQ1_PO}.
\end{proof}

\begin{proof}(of part (d))
To prove part $(d)$, we adapt the construction in part (b) with a small change: For every $i \in [p]$, we introduce four auxiliary goods $C_{i}^{L_{1}}, C_{i}^{L_{2}}, C_{i}^{R_{1}}, C_{i}^{R_{2}}$ instead of the original two $C_{i}^{L}, C_{i}^{R}$. Note that such an instance is $(6,4)$\emph{-sparse}. We adapt the changes in the construction to the allocation constructed in the forward direction by replacing $C_{i}^{L}$ (respectively, $C_{i}^{R}$) with the set of goods $\{C_{i}^{L_{1}}, C_{i}^{L_{2}}\}$ (respectively, $\{C_{i}^{R_{1}}, C_{i}^{R_{2}}\}$). In the reverse direction, it is possible that under the given allocation, say $A$, the \emph{main} agents no longer receive all three dummy goods. Similar to the argument in part (c), we can construct an allocation $A'$ that is identical to $A$ except that we allocate the triplet of dummy goods $\{D_{i}^{1}, D_{i}^{2}, D_{i}^{3}\}$ to the corresponding main agent. At this stage, with a similar argument as in the reverse direction of part (b), we can recover a satisfying \LES{} assignment.
\end{proof}

\section{Proof of Theorem~\ref{thm:EQ1+POstar-polytime}}
\label{sec:Proof_EQ1+POstar-polytime}

Recall the statement of \Cref{thm:EQ1+POstar-polytime}.
\EQonePOstarcomp*

We will start by describing the algorithm underlying this result, which, in turn, builds on Algorithm~\ref{alg:EQ1+Complete}. This will be followed by a formal proof of \Cref{thm:EQ1+POstar-polytime}.

\textbf{Description of the algorithm for \EQ{1} and \PO{}* allocations:} Let $\sigma := (a_{1}, a_{2}, \ldots, a_{n})$. Our algorithm for \Cref{thm:EQ1+POstar-polytime} consists of four phases.
Phases 1 and Phase 2 are identical to those in Algorithm~\ref{alg:EQ1+Complete}, and are used to find the optimal egalitarian welfare $\theta$ and the leftmost \emph{$\theta$-unsafe} agent $a_{i}$, respectively. Recall that in Phase 2, we also fix the allocations of the agents $a_{1}, a_{2}, \ldots, a_{i-1}$.

In the third phase, which we denote by Phase 3*, we partition the agents $a_{i}, \ldots, a_{n}$ in two \emph{groups} as follows: We start with a partial allocation of the first $i-1$ agents $a_{1}, a_{2}, \ldots, a_{i-1}$, and then consider the remaining agents sequentially from left to right. That is, in round $j \in \{i, i+1, \dots, n\}$, we consider the leftmost unallocated agent according to $\sigma$, namely $a_{j}$. 
Starting with the leftmost available good, we allocate a minimal bundle worth $\theta+1$ to $a_{j}$ (note that $\theta+1$ is a realizable utility value under binary valuations). Next, the algorithm checks whether there exists a connected and $\sigma$-consistent allocation such that each subsequent agent receives utility $\theta$ (this step is similar to that in Algorithm~\ref{alg:EQ1+Complete}). 
If the check passes (i.e., if there is a feasible partial allocation where the agents $a_{j+1}, \ldots, a_{n}$ receive utility $\theta$ each), then we assign $a_{j}$ to \emph{group 1} and allocate to it the minimal bundle with utility $\theta+1$ (this is a temporary allocation).
Otherwise, we assign $a_{j}$ to \emph{group 2} and allocate to it the minimal bundle worth $\theta$. The above procedure is repeated for all subsequent agents, following which the algorithm proceeds to the fourth phase.

In Phase 4*, we finalize the allocation of the agents $a_{i}, \ldots, a_{n}$ (recall that the allocation in Phase 3* is only tentative). At first, we mimic the allocation for agents $a_{i}, \ldots, a_{n-1}$ from Phase 3*, and allocate the remaining goods to $a_{n}$.
In this allocation, if $u_{n}(A_{n}) \leq \theta+1$, then the algorithm finalizes the bundles of \emph{all} agents and returns the allocation. Otherwise, the algorithm performs a \emph{right-to-left} scan of the path $ G $ similar to Phase 3 of Algorithm~\ref{alg:EQ1+Complete}. 
In particular, starting from the rightmost available good, the algorithm moves leftwards along $G$ and iteratively assigns minimal connected bundles with utility $\theta+1$ to group 1 or $\theta$ to group 2 agents 
in the reverse order $a_{n}, a_{n-1}, \ldots, a_{i+1}$. The remaining goods are assigned to agent $a_i$ and the final allocation is returned as the output.

\begin{proof} (of \Cref{thm:EQ1+POstar-polytime})
First, observe that the set of goods allocated in Phase 4* at least contains \emph{all} the goods allocated in a temporary allocation of Phase 3* (it may contain some additional goods). 
This is because, during a \emph{left-to-right} partial temporary allocation in Phase 3*, we may not consider leftover goods to the right of the allocated bundle for agent $a_{n}$.
Hence, in the final allocation in Phase 4*, algorithm has enough goods such that each group 1 agent receives a utility of at least $\theta+1$, and each group 2 agent receives a utility of at least $\theta$.

Next, we show that the allocation $A$ returned by the algorithm is $\sigma$-consistent, complete, and \EQ{1}. 
Notice that for the two cases in Phase 4*, in the last iteration, we allocate the leftover goods to agent $a_{n}$ or agent $a_{i}$; hence, completeness follows trivially. 
Also, $A$ is $\sigma$-consistent because the algorithm maintains this property at every step. 
Note that each of the agents $a_{1}, \ldots, a_{i-1}$ receives a bundle with utility $\theta+1$ under allocation $A$. 
In Phase 4*, consider the case when final allocation is a completion of temporary partial allocation from Phase 3* by assigning leftover goods to agent $a_{n}$.
Here, it is easy to see that the agents $a_{i}, \ldots, a_{n}$ are each allocated bundles with utility $\theta$ or $\theta+1$. Hence, the allocation is \EQ{1}.
For the other case, when the final allocation is built with a right-to-left traversal and the leftover goods are assigned to agent $a_{i}$, it is easy to see that, except for agent $a_{i}$ all other agents receive a bundle with utility either $\theta$ or $\theta+1$. Moreover, $a_{i}$ belongs to group 2 (using the definition of $\theta$-unsafe agent), and it receives a bundle with utility at least $\theta$.
Now, let $S, S'$ be the set of goods allocated to agent $a_{i}$ under the allocation $A$, and allocation (say $A'$) by Algorithm~\ref{alg:EQ1+Complete} for when we run it on the same instance respectively. 
Observe that $S \subseteq S'$ since the agents $a_{i+1}, \ldots, a_{n}$ receive a minimal bundle with utility either $\theta$ or $\theta+1$ under the allocation $A'$ while these agents receive a minimal bundle of utility exactly $\theta$ under the allocation $A$. 
At this stage, just like in the proof of Algorithm~\ref{alg:EQ1+Complete}, we can conclude that $u_{i}(S) = \theta$. 
Hence, the allocation $A$ is indeed \EQ{1}.

Finally, we show that $A$ is \PO{}* with a proof by contradiction. 
Let $B$ be a $\sigma$-consistent \EQ{1} allocation that Pareto dominates $A$.
From \Cref{thm:EQ1+complete-polytime}, we know that the optimal egalitarian welfare for any connected and $\sigma$-consistent allocation is $\theta$. 
Hence, each agent receives a bundle with utility either $\theta$ or $\theta+1$ under allocation $A$ due to the way our algorithm works.
Moreover, each agent receives a bundle with utility either $\theta$ or $\theta+1$ under allocation $B$ as $\theta$ is the optimal egalitarian welfare for the instance and allocation $B$ is \EQ{1}.
Let $j$ be the leftmost agent such that $u_{j}(A_{j}) < u_{j}(B_{j})$. 
We claim that $j > i$ where $a_{i}$ is the $\theta$-unsafe agent.
This is because the agents $a_{1}, a_{2}, \ldots, a_{i-1}$ each receive a bundle with utility $\theta+1$ under allocation $A$ and $a_{i}$ is the $\theta$-unsafe agent. 
It is easy to see that $u_{j}(B_{j})= \theta+1$. 
Let $A'$ be an allocation which is identical to allocation $A$ for all agents $a_{1}, a_{2}, \ldots, a_{j-1}$, and allocates a minimal connected bundle to $a_{j}$ such that $u_{j}(A_{j}') = \theta+1$.
Furthermore, let $A^{*}$ (respectively, $B^{*}$) be the set of goods to the right of bundle $A_{j}'$ (respectively, $B_{j}$).
We claim that $B^{*} \subseteq A^{*}$. This is because for all $\ell < j, u_{\ell}(A_{\ell}) = u_{\ell}(B_{\ell})$, and our algorithm allocated \emph{minimal} bundles.
But since $u_{j}(A_{j}) = \theta$, in Phase 3*, our algorithm labeled $a_{j}$ as group 1 agent. 
This implies that the set of goods $A^{*}$ is not sufficient to ensure a utility $\theta$ for all subsequent agents. 
Since $B^{*} \subseteq A^{*}$, the allocation $B$ is not \EQ{1} which is a contradiction.

Finally, let us turn to the running time analysis. 
Since we only consider \emph{binary} valuations, the list $L$ of all distinct realizable utility values contains at most $m$ distinct values, and can be precomputed in $\O(nm)$ time. 
By a similar running time analysis as in the proof of \Cref{thm:EQ1+complete-polytime}, it follows that the total running time for Phase 1 is $\OO(m^{2})$, and that for Phase 2 is $\OO(nm)$.

In Phase 3*, for each fixed $j$, in order to decide the group of agent $a_{j}$, each of the unallocated goods is considered towards at most one bundle. Hence, the total running time for this and each subsequent iteration is $\OO(m)$. Since there are at most $n$ iterations, the total running time for Phase 3* is $\OO(nm)$. In Phase 4*, we finalize the allocation of the agents $a_{i}, \ldots, a_{n}$ by constructing at most two complete allocations corresponding to the two cases. Each good is considered towards at most one bundle in each of these allocations. Thus, the algorithm requires $\OO(m)$ time in Phase 4*. Hence, the overall running time of our algorithm is $\OO(m^{2}+nm)$.
\end{proof}

We close this section by noting that the problem of computing \PO{}* allocations remains an interesting open question for general monotone valuations. Our algorithm for this problem does not extend too far beyond the binary regime, as the following example shows: Consider an instance with four goods $v_1, v_2, v_3, v_4$ and two agents $a_1, a_2$ with valuations $u_1=(1,3,1,0)$, $u_2=(0,0,0,2)$. Suppose the agent ordering is $\sigma = (1,2)$. 
The optimal egalitarian welfare is $\theta = 2$, and $a_{2}$ is the leftmost \emph{$\theta$-unsafe} agent. On this instance, our algorithm returns the \EQ{1} allocation $(\{v_{1}, v_{2}\}, \{v_{3}, v_{4}\})$, which is Pareto dominated by the \EQ{1} allocation $(\{v_{1}, v_{2}, v_{3}\}, \{v_{4}\})$.

\end{appendices}

\end{document}